\newif\ifdraft{}
\newif\iffull{}
\newif\ifsubmission{}
\definecolor{myorange}{HTML}{d95319}
\algnewcommand{\LineComment}[1]{\Statex \hskip\ALG@thistlm {\color{gray}\textrm{// #1}}}
\algnewcommand{\SectionComment}[2]{\Statex {\color{#2}\(\triangleright\) \textrm{#1}}}
\algnewcommand{\InlineComment}[1]{{\hspace{0.5em}\color{gray}\textrm{// #1}}}
\algnewcommand{\InlineCommentText}[1]{{\hspace{0.5em}\color{gray}{// #1}}}
\algnewcommand{\Phase}[1]{\SectionComment{#1}{myorange}}
\algnewcommand\algorithmicswitch{\textbf{switch}}
\algnewcommand\algorithmiccase{\textbf{case}}
\algnewcommand\algorithmicassert{\texttt{assert}}
\algnewcommand\Assert[1]{\State \algorithmicassert(#1)}%
\algnewcommand\algorithmiccontinue{\textbf{continue}}
\algnewcommand\algorithmicbreak{\textbf{break}}
\algnewcommand\Continue{\algorithmiccontinue}
\algnewcommand\Break{\algorithmicbreak}
\algnewcommand{\IIf}[1]{\State\algorithmicif\ #1\ \algorithmicthen}
\algnewcommand{\EndIIf}{\unskip\ \algorithmicend\ \algorithmicif}
\algnewcommand\algorithmicas{\textbf{as}}
\algnewcommand\algorithmicon{\textbf{on}}
\def\protocol{EESMR\xspace}
\def\txpool{\ensuremath{txpool}\xspace}
\def\currentviewvar{\ensuremath{v_{cur}}\xspace}
\def\currentleadervar{\ensuremath{L}\xspace}
\def\currentroundvar{\ensuremath{r_{cur}}\xspace}
\def\Propose{{\sc{Propose}}\xspace}
\def\Blame{{\sc{Blame}}\xspace}
\def\commitUpdateMsg{{\sc{CommitUpdate}}\xspace}
\def\certify{{\sc{Certify}}\xspace}
\def\newViewProposal{{\sc{NewViewProposal}}\xspace}
\def\voteMsg{{\sc{VoteMsg}}\xspace}
\def\Cmds{{\textrm{{\sc{Cmds}}}}\xspace}
\def\blameTimer{\ensuremath{T_{blame}}\xspace}
\def\commitTimer{\ensuremath{T_{commit}}\xspace}
\newcommand{\sizeof}[1]{\ensuremath{\vert #1 \vert}\xspace}
\newcommand{\set}[1]{\ensuremath{\left\{#1\right\}}\xspace}
\newcommand{\p}[1]{\left(#1\right)}
\newcommand{\sign}[1]{\ensuremath{\mathsf{Sign}\p{#1}}}
\newcommand{\vrfy}[1]{\ensuremath{\mathsf{Verify}\p{#1}}}
\def\K{\ensuremath{k}\xspace}
\def\din{\ensuremath{d_{in}}\xspace}
\def\dout{\ensuremath{d_{out}}\xspace}
\def\kcast{\K-cast\xspace}
\def\kcasts{\K-casts\xspace}
\def\h{\ensuremath{\mathcal{H}}\xspace}
\def\V{\ensuremath{\mathcal{N}}\xspace}
\def\E{\ensuremath{\mathcal{E}}\xspace}
\def\parsehgraph{\ensuremath{\h := \p{\V, \E}}\xspace}
\def\parseV{\ensuremath{\V = \set{\node_1,\dots , \node_n}}\xspace}
\def\sendNd{\ensuremath{\mathtt{send}}\xspace}
\def\recvNd{\ensuremath{\mathtt{recv}}\xspace}
\def\delay{\ensuremath{\Delta}\xspace}
\def\node{\ensuremath{p}\xspace}
\newcommand{\nodei}[1]{\ensuremath{\node_{#1}}\xspace}
\def\nodes{\ensuremath{\V}\xspace}
\newcommand{\parseVn}[1]{\ensuremath{\V= \set{\node_1,\dots , \node_{#1}}}\xspace}
\newcommand{\hedge}[2]{\ensuremath{\set{\node_{#1},\left\{#2 \right\}}}\xspace}
\def\block{\ensuremath{B}\xspace}
\newcommand{\blocki}[1]{\ensuremath{\block_{#1}}\xspace}
\def\altblock{\ensuremath{B^{\star}}\xspace}
\newcommand{\altblocki}[1]{\ensuremath{\altblock_{#1}}\xspace}
\def\blkcontent{\ensuremath{b}\xspace}
\def\defaultblkcontent{\ensuremath{\blkcontent_{m}}\xspace}
\def\hashptr{\ensuremath{h}\xspace}
\def\defaulthashptr{\ensuremath{\hashptr_{m-1}}\xspace}
\def\connectivityresult{\ensuremath{\min\limits_{\nodei{i}\in\nodes} \p{ \dout\p{\nodei{i}},  \din\p{\nodei{i}}  }}\xspace}
\def\faulty{faulty\xspace}
\def\opto{\ensuremath{\psi^{\star}}\xspace}
\def\optt{\ensuremath{\psi}\xspace}
\def\optst{\ensuremath{\psi^{Baseline}}\xspace}
\def\opt{\ensuremath{\psi^{\protocol}}\xspace}
\def\nuf{\ensuremath{\nu_f}\xspace}
\def\sigsign{\ensuremath{\sigma_S}\xspace}
\def\sigvrfy{\ensuremath{\sigma_V}\xspace}
\def\macvrfy{\ensuremath{\mu_V}\xspace}
\def\macsign{\ensuremath{\mu_S}\xspace}
\def\blame{\ensuremath{\mathsf{blame}}\xspace}
\newcommand{\msg}[1]{\ensuremath{\left\langle #1 \right\rangle}\xspace}
\def\blameMsg{\ensuremath{\msg{\blame,v}}\xspace}
\def\lockBlock{\ensuremath{B_{lck}}\xspace}
\def\commitBlock{\ensuremath{B_{com}}\xspace}
\renewcommand{\paragraph}[1]{
    \iffull
    \medskip\noindent \textbf{#1.\xspace}
    \else
    \noindent \textbf{#1.\xspace}
    \fi
}
\newcommand{\etal}{\emph{et al.}}
\newcommand{\parhead}[1]{\noindent\textbf{#1.}\xspace}
\newcommand{\italhead}[1]{\noindent\textit{#1.}\xspace}
\def\titletext{\protocol{}: Energy Efficient BFT---SMR for the masses}
\title{\titletext{}}
\author{Adithya Bhat}
\email{abhatk@purdue.edu}
\affiliation{%
  \institution{Purdue University}
  \city{West Lafayette}
  \state{IN}
  \country{USA}
}
\author{Akhil Bandarupalli}
\email{abandaru@purdue.edu}
\affiliation{%
  \institution{Purdue University}
  \city{West Lafayette}
  \state{IN}
  \country{USA}
}
\author{Manish Nagaraj}
\email{mnagara@purdue.edu}
\affiliation{%
  \institution{Purdue University}
  \city{West Lafayette}
  \state{IN}
  \country{USA}
}
\author{Saurabh Bagchi}
\email{sbagchi@purdue.edu}
\affiliation{%
  \institution{Purdue University}
  \city{West Lafayette}
  \state{IN}
  \country{USA}
}
\author{Aniket Kate}
\email{aniket@purdue.edu}
\affiliation{%
  \institution{Purdue University}
  \city{West Lafayette}
  \state{IN}
  \country{USA}
}
\author{Michael K. Reiter{*}}\thanks{*The author was also affiliated with Chainlink Labs at the time of this work.}
\email{michael.reiter@duke.edu}
\affiliation{%
  \institution{Duke University}
  \city{Durham}
  \state{NC}
  \country{USA}
}
\begin{document}

\begin{abstract}
Modern Byzantine Fault-Tolerant State Machine Replication (BFT-SMR) solutions focus on reducing communication complexity, improving throughput, or lowering latency.
This work explores the energy efficiency of BFT-SMR protocols.
First, we propose a novel SMR protocol that optimizes for the steady state, i.e., when the leader is correct.
This is done by reducing the number of required signatures per consensus unit and the communication complexity by order of the number of nodes $n$ compared to the state-of-the-art BFT-SMR solutions.
Concretely, we employ the idea that a quorum (collection) of signatures on a proposed value is avoidable during the failure-free runs.
Second, we model and analyze the energy efficiency of protocols and argue why the steady-state needs to be optimized.
Third, we present an application in the cyber-physical system (CPS) setting, where we consider a partially connected system by optionally leveraging wireless multicasts among neighbors.
We analytically determine the parameter ranges for when our proposed protocol offers better energy efficiency than communicating with a baseline protocol utilizing an external trusted node.
We present a hypergraph-based network model and generalize previous fault tolerance results to the model.
Finally, we demonstrate our approach's practicality by analyzing our protocol's energy efficiency through experiments on a CPS test bed.
In particular, we observe as high as $64\%$ energy savings when compared to the state-of-the-art SMR solution for $n=10$ settings using BLE.\@
\end{abstract}

%%% Local Variables:
%%% mode: latex
%%% TeX-master: "main"
%%% End:

\maketitle

% ADD page numbers if not already added
% \pagestyle{plain}

% Introduction Section
\section{Introduction}

Byzantine faults are not new to the computing world.
For more than four decades, the distributed system literature has addressed Byzantine faults using a variety of consensus mechanisms over replicated nodes.
Among those mechanisms, we focus on State Machine Replication (SMR)~\cite{schneiderImplementingFaulttolerantServices1990,castroPracticalByzantineFault2002a, kotlaZyzzyvaSpeculativeByzantine2007a, abrahamSyncHotStuffSimple2020, yinHotStuffBFTConsensus2019, golanguetaSBFTScalableDecentralized2019, bessaniStateMachineReplication2014,shresthaOptimalityOptimisticResponsiveness2020} as it is a generic consensus problem that ensures that the correct nodes agree on the same sequence of values in the presence of Byzantine nodes.
SMR is of general interest in applications such as Multi-party computation~\cite{abrahamBlinderMPCBased2020,bachoNetworkAgnosticSecurityComes2022} which require repeated usages of one-shot consensus problems such as Byzantine Agreement (BA)~\cite{lamportByzantineGeneralsProblem1982} or Reliable Broadcast~\cite{brachaAsynchronousConsensusBroadcast1985}.
An SMR can also implement a distributed ledger (or a blockchain), which has resulted in the tremendous recent interest in these protocols~\cite{abrahamSynchronousByzantineAgreement2019, abrahamSyncHotStuffSimple2020,chanPiLiExtremelySimple2018, castroPracticalByzantineFault2002a, yinHotStuffBFTConsensus2019, golanguetaSBFTScalableDecentralized2019}.

Due to its widespread adoption, energy-efficiency of SMR is an important factor as they are employed in data-centers to ensure system resilience~\cite{singhZenoEventuallyConsistent2009a,ByzantineFailureReal2020}.
They are also considered for the CPS/IoT environments~\cite{novoScalableAccessManagement2019,schillerBlockchainMSP430IEEE2020,profentzasIoTLogBlockRecordingOffline2019,profentzasTinyEVMOffChainSmart2020,liDistributedConsensusAlgorithm2019,bodkheSurveyDecentralizedConsensus2020,yuSpecialIssueResilient2020, zhaoBlockchainEnabledCyberPhysical2021,zhaoBlockchainEnabledIndustrial2019,chooIntroductionSpecialIssue2021} where energy-efficiency can determine how long the system functions.
Even in energy-rich environments such as data-centers, energy costs \emph{can be reduced} by running a more energy-efficient SMR protocol in large scales, and thus this problem is of general interest.

Meanwhile, Cyber-Physical Systems (CPS) provide integrated computational, mechanical, and communication components that interact with each other.
CPS can vary from a system of temperature sensors deployed in power plants~\cite{sheiretovMWMArraySensorsSitu2009}
to a network of soil nutrient monitoring sensors spread over a farm field,
to unmanned aerial vehicles or drones deployed to monitor traffic~\cite{srinivasanAirborneTrafficSurveillance2004}.
In many of these scenarios, the various nodes or devices in such systems need to agree on a common state.
For example, in a military setting, several CPS nodes could be deployed to survey an area; they should maintain a shared state, which they report during their sporadic contacts with a base station.

An adversary may try to intervene in these consensus processes by compromising some nodes: the compromised nodes may communicate incorrect values, possibly in collusion with other compromised nodes, or prevent the correct nodes in the system from performing a consistent action (e.g., Byzantine Generals Problem~\cite{lamportByzantineGeneralsProblem1982}).
While such adversarial threats are obvious in battlefield CPS, they are also becoming pervasive across the entire CPS application spectrum~{\cite{liDistributedConsensusAlgorithm2019,bodkheSurveyDecentralizedConsensus2020,yuSpecialIssueResilient2020, zhaoBlockchainEnabledCyberPhysical2021,zhaoBlockchainEnabledIndustrial2019,chooIntroductionSpecialIssue2021}}.
For example, a recent US Department of Homeland Security (DHS) report on precision agriculture~\cite{mutschlerThreatsPrecisionAgriculture2018} asks to pay particular attention to Byzantine/malicious faults.
The agency observes that an intentional falsification of data can disrupt crop or livestock sectors as the introduction of rogue data into a sensor network can damage crops or herds.

Today, with the focus on Internet-based instantiations, the state-of-the-art SMR solutions try to solely reduce the latency or communication overhead by using, for example, threshold signature schemes~\cite{bonehShortSignaturesRandom2008}.
While theoretically interesting, threshold signatures are computationally intensive and have high energy demands, making SMR unsuitable for CPS.\footnote{Threshold signing operations are also inefficient that many of the protocols, e.g.,~\cite{abrahamSyncHotStuffSimple2020,yinHotStuffBFTConsensus2019,golanguetaSBFTScalableDecentralized2019}, forgo threshold signatures in their implementations, opting instead for multiple traditional digital signatures and resulting linear growth in communication complexity.}
In general, na\"ively employing an SMR protocol that is not optimized for energy efficiency reduces the lifetime of CPS, and we set out to explore energy efficiency for the SMR problem for the CPS setting.

Indeed, the state-of-the-art latency-optimal SMR protocols OptSync~\cite{shresthaOptimalityOptimisticResponsiveness2020} and Sync HotStuff~\cite{abrahamSyncHotStuffSimple2020} are not energy-efficient solutions for SMR due to their high (bit-level) communication complexity and extensive use of expensive cryptographic primitives.
In general, for SMR in CPS, since communication and cryptography are the main sources of energy drain, one might try to use the state-of-the-art SMR protocol, and optimize away the bottleneck.
However, generally optimizing in one direction leads to degradation of the other.
For example, an attempt toward energy efficiency can be to substitute expensive cryptography such as threshold signatures or public-key cryptography with simpler symmetric-key primitives.
Doing so results in increased communication footprint~\cite{aiyerMatrixSignaturesMACs2008,castroPracticalByzantineFault2002a,beerliova-trubiniovaEfficientByzantineAgreement2007} or reduced fault tolerance~\cite{abrahamInformationTheoreticHotStuff2020,castroPracticalByzantineFault2002a,monizIstanbulBFTConsensus2020}, both of which are not ideal.

% Summary of contributions

\paragraph{Contributions}
This work makes the following four contributions towards energy-efficient consensus for CPS:
\begin{enumerate}[(1),wide]
      \item We present \protocol\footnote{{\protocol} stands for {\em Energy Efficient State Machine Replication}. Also pronounced as easy-SMR.}, an energy-efficient, leader-based SMR protocol (\cref{sec:SMR}).
      In the steady state, i.e., when the leader is correct, \protocol{} consumes less energy than the state-of-the-art SMR protocols Sync HotStuff and OptSync.
      The trade-off is that during the view-change phase, i.e., the sub-protocol to change the leader, \protocol{} performs slightly worse than Sync HotStuff~\cite{abrahamSyncHotStuffSimple2020} by adding an extra round.
      Concretely, \protocol{} avoids using computationally expensive certificates and voting in the steady state and pushes those overheads to the view-change phase invoked when a leader stops making progress or if it equivocates.

      \item As every application setting is different, we derive an easy-to-use template for comparing SMR protocols in various CPS settings (\cref{sec:analysis}).
      We model the energy cost of protocols as functions of system parameters such as the number of nodes $n$ and message size $m$.
      Our analysis shows the conditions (number of faults, number of nodes, wireless communication modalities, etc.) where an SMR protocol \textit{A} will be more energy-efficient than a competing protocol \textit{B}.
      We demonstrate the use of this analysis to determine the conditions \protocol{} is desirable compared to a baseline solution where nodes always communicate with a trusted control node.

      \item We observe that in some CPS settings, a node can communicate with its \K neighbors using a single multicast message, instead of \K unicasts with each neighbor.
      We take advantage of such multicasts if they are available, reliable and are more energy-efficient when handling protocol messages than unicasts, to improve the energy efficiency of protocols.
      In order to capture this communication modality, we extend the standard directed graph model of the network using hypergraphs.

      \item Finally, using an experimental CPS test-bed, we analyze the energy efficiency of cryptographic primitives along with parameters as well as the available communication modalities (\cref{sec:results}).
      Moreover, we measure and choose energy-efficient digital signature schemes.
      As \protocol{} and other SMR protocols heavily employ the communication pattern of a single leader signing and the others verifying, we find that the verification-efficient RSA signatures are more energy-efficient than the ECDSA signature scheme.
      We also analyze the energy-vs.-reliability trade-off for Bluetooth Low Energy (BLE) multicasts and demonstrate that \kcasts{} with four nines reliability are more efficient than unicasts.
      For our optimal choice of cryptographic primitives and communication modality, we show that \protocol{} is $2.8$ times more energy-efficient than the state-of-the-art SMR protocol Sync HotStuff~\cite{abrahamSyncHotStuffSimple2020} {in the failure-free runs}, and $2$ times worse than Sync HotStuff during leader changes, demonstrating the positive trade-offs of our approach.
\end{enumerate}

%%% Local Variables:
%%% mode: latex
%%% TeX-master: "main"
%%% End:

%%% Local Variables:
%%% mode: latex
%%% TeX-master: "main"
%%% End:

% Prelims
% Definitions, Notation and Assumptions
\section{Preliminaries}\label{sec-notations}

\parhead{System Model} 
Consider an $n$-node system $\parseV$, with up to $f<n/2$ nodes being Byzantine. 
We assume a static fully-connected point-to-point communication graph. 
The links are bounded synchronous~\cite{wangOptimalGeneralizedByzantine2014,wangReachingTrustedByzantine2014,zimmerlingSynchronousTransmissionsLowPower2020,ferrariEfficientNetworkFlooding2011,sahaEfficientManytoManyData2017,abrahamSynchronousByzantineAgreement2019,abrahamSyncHotStuffSimple2020,abrahamOptimalGoodCaseLatency2022,garayBitcoinBackboneProtocol2015}. 
We assume that $\delay$ is a public upper bound on the message delivery time for any correct sender. 
The adversary can control the delivery of messages as long as the bounded-synchrony assumption is not violated.

\parhead{State Machine Replication --- SMR}
SMR is an abstraction of a state machine that applies input requests from clients to a state and outputs new state. 
An SMR protocol~\cite{schneiderImplementingFaulttolerantServices1990} (defined formally in~\cref{def:SMR def}) ensures all correct nodes implement the SMR abstraction where the nodes reach agreement on an ordering of requests (or transactions) via a linearizable log. 

\begin{definition}[State Machine Replication (SMR)~\cite{abrahamSyncHotStuffSimple2020}]\label{def:SMR def}
    An SMR protocol generates a linearizable log of transactions satisfying the following:
    \begin{asparaenum}[(1)]
        \item \textbf{Safety.} For any position $m$ in the log, if two correct nodes output $\blocki{m}$ and \altblocki{m} respectively, then $\blocki{m} = \altblocki{m}$.
        \item \textbf{Liveness.} Each client request is eventually committed by all the correct nodes.
    \end{asparaenum}
\end{definition}

We make the same assumptions as existing SMR literature~\cite{profentzasIoTLogBlockRecordingOffline2019,profentzasTinyEVMOffChainSmart2020,abrahamSyncHotStuffSimple2020,chanPaLaSimplePartially2018,chanPiLiExtremelySimple2018,shresthaOptimalityOptimisticResponsiveness2020} with respect to the role of clients and the validity of the requests.
Some example assumptions include: all the clients are honest when measuring the throughput, every transaction can be validated for correctness, every protocol message has an identifier for that instance so messages from one instance will not become equivocation\footnotemark{} in another instance, the hash functions are all seeded, PKI is used to setup (possibly threshold) keys before starting the protocol, and the public information is agreed upon by all the nodes as part of the setup before the start of the protocol.

\footnotetext{An equivocation is a Byzantine behavior where a node sends two conflicting messages to two different nodes.}

\parhead{Blocks}
In our instantiation of SMR, we use blocks to denote a unit of the linearizable log.
A block is defined as 
\begin{align*}
    block.contents &= \Cmds, \\
    block.parent &= \text{hash of the parent block}.
\end{align*}
$block.parent$ is the hash of the \emph{parent} block. $block.contents$ is the set of \textit{requests} from clients called \Cmds. 
We let $G$ be the genesis block with height $0$, and recursively define the height of other blocks as the height of the parent plus one.
We refer to the \emph{highest block} as the block with the highest height.

%%% Local Variables:
%%% mode: latex
%%% TeX-master: "main"
%%% End:

\parhead{Clocks}
We assume that the clocks of all the nodes have a constant bounded skew achieved using network time protocol~\cite{sichitiuSimpleAccurateTime2003,sunTinySeRSyncSecureResilient2006,songAttackresilientTimeSynchronization2007} and that all the nodes start within time $\delay$ of each other, which is ensured using a clock synchronization protocol~\cite{abrahamSynchronousByzantineAgreement2019}.
All nodes start at local time $t=0$\footnote{The local time may be off by $\Delta$ across different nodes. This is known as the non-lock step model~\cite{abrahamSyncHotStuffSimple2020} as opposed to the lock-step synchrony model which assumes that all the nodes are in the same round at the same time.}.

\gdef\Din{\ensuremath{D_{in}}\xspace}
\gdef\Dout{\ensuremath{D_{out}}\xspace}

\parhead{Cryptographic Primitives}\label{sec:crypto primitives} 
Generally, there are two classes of authentication primitives: digital signatures which are asymmetric (public-key) primitives, and Message Authentication Codes (MACs) schemes which are symmetric-key primitives. 

A digital signature scheme consists of three algorithms which fail with negligible probability in the security parameter: $\mathsf{KeyGen}$, $\mathsf{Sign}$, and $\mathsf{Verify}$. $\p{sk, pk} \gets \mathsf{KeyGen}(\kappa)$, where $\kappa$ is the security parameter. 
$sk$ is the secret key used to sign the message and $pk$ is the public key used by others to verify that the message was signed by whoever possesses $sk$. 
$\sigma \gets \sign{ sk_i, m }$, where $\sigma$ denotes the signature on $m$ signed by using secret key $sk_i$. 
$\set{0,1} \gets \vrfy{ m,\sigma, pk }$ outputs $1$ if the signature $\sigma$ signed by $sk$ on $m$ is valid, and outputs $0$ otherwise.
We denote $\msg{m}_i$ as $(m, \sigma)$ where $\sigma \gets \sign{sk_i,m}$ and $sk_i$ is the secret key of node $\nodei{i}$.

A Message Authentication Code (MAC) scheme consists of three similar algorithms.
A MAC is generally more energy-efficient than a digital signature, but a digital signature scheme ensures transferable authentication, i.e., a node can convince others that a particular node sent a particular message using a signature on it without possessing $sk$ which is useful for detecting equivocation.  
In a MAC scheme, it is harder to verify equivocation~\cite{aiyerMatrixSignaturesMACs2008}
in the distributed environment even for a moderate number of nodes.

We use both primitives for energy-analysis, and use digital signatures in our protocol.
We assume a public key infrastructure (PKI) to set up digital signatures.
Similar to the related work~\cite{abrahamSynchronousByzantineAgreement2019, abrahamSyncHotStuffSimple2020, shresthaOptimalityOptimisticResponsiveness2020}, threshold signatures can reduce the certificate size to $O(1)$.

%%% Local Variables:
%%% mode: latex
%%% TeX-master: "main"
%%% End:

% Protocol
\section{\protocol{} Protocol}\label{sec:SMR}

% Use the following conventions in the algorithms:
%   := for parsing
%   <- for assignment
%   =  for equality checks

The \protocol{} protocol implements the SMR abstraction (\cref{def:SMR def}).
It builds a chain of blocks linked using hashes, and all the nodes execute the client requests in the blocks in the same sequence.
The clients waits to receive $f+1$ identical acknowledgments with execution results and accepts the results.
We omit the clients and the execution from the rest of the discussion and focus on the nodes that run the SMR.\@

\protocol{} protocol runs in a sequence of \emph{rounds} and \emph{views} with monotonically increasing numbers from $1$.
Each view consists of a dedicated unique leader $L \in \V$ known to all. 
Within a view, the leader proposes blocks in every round.
All nodes maintain pending commands in a local data structure $\txpool$.
The leader proposes blocks using the commands from $\txpool$ and the other nodes on committing a block, remove the commands in the block from the $\txpool$.

\subsection{Data Structures} 

\parhead{Messages and Quorum Certificates} 
Every message $m$ contains the view number in the field $m.view$ and a threshold signature validating it in the field $m.dataSig$ and $m.viewSig$.
The message is customizable into different messages in the protocol by specifying its type in $m.type$ (as specified in \cref{alg:util}).
$n/2+1$ valid threshold signed messages $m$ from the same view and type is combined into a \emph{Quorum Certificate} (using the \Call{QC}{} function in \cref{alg:util}).

\begingroup
\captionsetup[algorithm]{style=algori}
\captionof{algorithm}{Helper functions for \protocol{} (for node $\nodei{i}$).}\label{alg:util}
\small\begin{algorithmic}[1]
\setcounter{ALG@line}{100}
\Function{Msg}{$type$, $data$}
\State{} $m.type \gets type$
\State{} $m.data \gets data$
\State{} $m.view \gets \currentviewvar$
\State{} $m.viewSig \gets \langle m.type, \currentviewvar \rangle_{i}$
\State{} $m.dataSig \gets \langle m.data, \currentviewvar \rangle_{i}$
\State{} \Return{} $m$
\EndFunction{}

\Function{CreateProposal}{$block$, \Cmds}
\State{} $newBlock.parent \gets block$
\State{} $newBlock.contents \gets \Cmds$
\State{} \Return{} $newBlock$
\EndFunction{}

\Function{MatchingMsg}{$m$, $type$, $view$}
\State{} \Return{} $m.type = type \land m.view = view$
\EndFunction{}

\Function{QC}{$V$}
    \State{} $qc.type \gets m.type: m \in V$
    \State{} $qc.view \gets m.view : m \in V$
    \State{} $qc.cert \gets \mathcal{C}\p{m.type, m.view }: \{m.viewSig \mid m \in V\}$
    \State{} \Return{} $qc$
\EndFunction{}

\Function{MatchingQC}{$qc$, $type$, $view$}
\State{} \Return{} $qc.type = type \land qc.view = view$
\EndFunction{}

\Function{LockCompare}{\lockBlock{}, $b$}\label{alg:util:lock update procedure}
\If{$view(\lockBlock) \ne view(b)$ and $b$ extends \lockBlock{}}
\State{} \Return{} $b$
\EndIf{}
\If{$round(\lockBlock) \ne round(b)$ and $b$ extends \lockBlock{}}
\State{} \Return{} $b$
\EndIf{}
\State \Return{} \lockBlock{} \hfill \InlineComment{Retain the old lock}

\EndFunction{}
\end{algorithmic}
\endgroup

%%% Local Variables:
%%% mode: latex
%%% TeX-master: "../main"
%%% End:

\parhead{Book-keeping variables}
We maintain state variables to track the current state of the protocol:
\begin{enumerate*}[(i)]
    \item \currentviewvar{} tracks the current view number, initialized with $1$.
    \item \currentroundvar{} tracks the current round number, initialized with $3$.
    \item \currentleadervar{} tracks the current leader.
    \item \lockBlock{} tracks the currently locked chain, initialized to the genesis block $G$.
    A locked chain is only updated using the \Call{LockCompare}{} function in \cref{alg:util:lock update procedure} and \emph{always} extends the highest committed block.
    A node will never allow any other chain to fork-away from the locked block \lockBlock{} (unless it is safe to do so, described in the view-change) while ensuring that it always extends the highest committed block.
    \item \commitBlock{} tracks the highest committed block, and is initialized to the genesis block $G$.
    \item Timers $\commitTimer(block)$ and $\blameTimer(\currentviewvar)$ are used to commit blocks and send blame messages for a round respectively.
\end{enumerate*}

\subsection{Protocol Overview}
Our protocol is motivated from Sync HotStuff~\cite{abrahamSyncHotStuffSimple2020} and OptSync~\cite{shresthaOptimalityOptimisticResponsiveness2020}.
We present the technical version of our protocol in \cref{alg:protocol}.
A crash-version of our protocol can be obtained by removing the blocks in \cref{alg:protocol:byz1,alg:protocol:byz2} as they are associated with equivocation which is allowed only for Byzantine faults.

The description contains code regions, which are atomically executed.
All blocks such as the conditional \textbf{on} regions are running concurrently.
The same node may act in multiple roles, for instance, a node maybe a leader as well as a node and thus execute both parts of the \textbf{as} regions.

The protocol consists of two sub-protocols: \emph{steady-state protocol} and \emph{view-change protocol}.
The steady-state is a period when the leader is behaving correctly.
The steady-state protocol can be a separate protocol, with safety guarantees but no liveness guarantees.
The view-change is the phase that handles Byzantine behavior of the steady-state leader, by transitioning from view $v$ with the Byzantine leader to view $v+1$.
If the leader of view $v+1$ is Byzantine, then another view-change protocol is initiated.
The view-change protocol needs to ensure both safety and liveness.
Technically, with minor modifications, the view-change protocol can be a stand-alone SMR protocol.
However, it is inefficient, and is only used to bring the system back into a steady-state to enjoy efficiency benefits of the steady-state.

The protocol proceeds in views $1, \ldots$ and rounds $3,\ldots$. 
The first $2$ rounds in the counter are reserved for the view-change protocol, and ensures that the view starts with a safe block.
We assume that the leaders are chosen using the \Call{Leader}{\currentviewvar} function.
This function can be round-robin for simplicity, but for expected constant-latency it is required that the leaders are chosen randomly~\cite{abrahamSynchronousByzantineAgreement2019}.
All messages are validated for correctness, such as formatting, signature checks, etc, and processed when the round is current.
If messages are received from rounds larger than $\currentroundvar$ they are buffered and processed on entering the round to ensure liveness.

\italhead{Note on chain synchronization}
Our description of the protocol accounts for chain-synchronization, i.e., when a node obtains a block and does not know its parent blocks, it will request them from the sender first.
To ease the exposition, we do not include this request-response in the protocol description, but the timers in our description account for these synchronization.
Note that the chain synchronization is not a by-product of our protocol, but is used in protocol and thus does not affect the results.
Since a Byzantine node can trigger chain synchronization once every round, this results in a communication-complexity $O(n^2)$ per round.

\begin{algorithm*}
\caption{\protocol{} Protocol (for node $\nodei{i}$).}\label{alg:protocol}
\begin{multicols}{2}
\small\begin{algorithmic}[1]
\setcounter{ALG@line}{200}
\For{$\currentviewvar \gets 1, 2, 3, \ldots$}
\For{$\currentroundvar \gets 3, 4, 5, \ldots$}
{\Statex{} \hspace{0.6cm} {\color{myorange}\(\triangleright\) \textrm{Steady-State Phase (Safe)}}}
\As{a leader} \hfill \InlineComment{$L \gets$ \Call{Leader}{\currentviewvar}}
\State{} {\color{gray}{//~\Cmds{} is obtained from \txpool{}}}
\State{} $newBlock \gets \Call{CreateProposal}{\lockBlock, \Cmds}$\label{alg:protocol:steady state block stream}
\State{} $propData \gets (newBlock, \currentroundvar)$
\State{} $curProposal \gets$ \Call{Msg}{\Propose{}, $propData$}
\State{} broadcast $curProposal$ once
\EndAs{}
\As{a node} \hfill \InlineCommentText{Also executed by the leader}
\State{} start or reset $\blameTimer(\currentviewvar)$ to time $4\delay$\label{alg:protocol:blame-reset}
\State{} wait for \emph{first} valid $newProposal$ from any node
\LineComment{Vote in the head}
\State{} update $\lockBlock \gets newProposal.data[0]$\label{alg:protocol:line:voting-in-the-head}
\State{} broadcast $newProposal$ once\label{alg:protocol:steady state relay broadcast}
\State{} set $\commitTimer(\lockBlock) \gets 4\delay$\label{alg:protocol:commit-set}
\State{} call \Call{NextRound}{}
\EndAs{}
{\Statex{} \hspace{0.6cm} {\color{myorange}\(\triangleright\) \textrm{View-Change Phase (Safe + Live)}}}

\LineComment{Handle lack of progress}
\On{time out of \blameTimer(\currentviewvar)} \hfill \InlineCommentText{Crash}\label{alg:protocol:blame-np}
\State{} $blameData = \bot$
\State{} $blameMsg \gets$ \Call{Msg}{\Blame, $blameData$}
\State{} broadcast $blameMsg$
\EndOn{}

\LineComment{Handle equivocation}
\On{$pr_1, pr_2 \land pr_1.round = pr_2.round$} \hfill \InlineCommentText{Byz.}\label{alg:protocol:byz1}\label{alg:protocol:blame-eq}
\State{} $blameData \gets \bot$
\State{} $blameMsg \gets$ \Call{Msg}{\Blame, $blameData$}
\State{} broadcast $(blameMsg, pr_1, pr_2)$ once
\EndOn{}

\On{blame $m$ with $m.data = (prop_1,prop_2)$} \hfill \InlineCommentText{Byz.}\label{alg:protocol:byz2}
\State{} cancel all commit timers $\commitTimer(\cdot)$\label{alg:protocol:commit-cancel-eq}
\State{} broadcast $m$ once
\EndOn{}

\LineComment{Change the view}
\On{$f+1$ \Call{MatchingMsg}{$m$, $\blame$, \currentviewvar} as $V$}\label{alg:protocol:quit-view1}
\State{} cancel all commit timers $\commitTimer(\cdot)$
\State{} $blameQC \gets$ \Call{QC}{$V$}
\State{} broadcast $blameQC$ once
\EndOn{}

\On{valid $blameQC$}\label{alg:protocol:quit-view2}
\State{} broadcast $blameQC$ once
\State{} wait $\delay$ \hfill \InlineComment{Ensure all correct nodes quit the view}\label{alg:protocol:quit view wait}
\State{} call \Call{QuitView}{}
\EndOn{}
\EndFor{}
\EndFor{}

\Procedure{QuitView}{}
\State{} $commitData \gets \commitBlock$
\State{} $comReqMsg \gets$ \Call{Msg}{\commitUpdateMsg, $commitData$}
\State{} broadcast $comReqMsg$\label{alg:line:send highest committed block}
\State{} wait $5\delay$ to obtain $commitQC$\label{alg:protocol:new view wait}
\State{} broadcast $commitQC$ and wait $\delay$
\State{} call \Call{NewView}{$commitQC$}

\On{\commitUpdateMsg{} msg $m$ with $b := m.data$}
\If{$b$ does not conflict with \lockBlock}\label{alg:line:vote for highest qc}
\State{} send \Call{Msg}{\certify, $b$} to the sender of $m$
\EndIf{}
\EndOn{}

\On{$f+1$ \Call{MatchingMsg}{$m$, \certify, \currentviewvar} as $V$}
\State{} $commitQC \gets \Call{QC}{V}$
\State{} $commitUpdateMsg \gets $\Call{Msg}{\commitUpdateMsg, $commitQC$}
\EndOn{}

\On{$commitQC'$ for $b'$ from other nodes}
\If{$b'$ does not conflict with \lockBlock and extends $commitQC$}\label{alg:line:update highest QC}
\State{} $commitQC \gets commitQC'$
\EndIf{}
\EndOn{}
\EndProcedure{}

\Procedure{NewView}{$commitQC$}
\State{} $\currentviewvar \gets \currentviewvar + 1$
\State{} $\currentroundvar \gets 1$
\State{} $\currentleadervar \gets$ \Call{Leader}{\currentviewvar}
\As{new leader}
\State{} wait $4\delay$ to hear $commitQC$ from $f+1$ nodes in $status$\label{alg:timer:new view leader wait for QC}
% \State{} update \lockBlock using \Call{LockUpdate}{} function (TODO)
% \State{} $data \gets \{B_{1},\ldots,B_{n/2+1}\}$ where $B_{i}$ are the $n/2+1$ highest locked blocks \lockBlock{}
\State{} $prop \gets$ \Call{Msg}{\newViewProposal, $status$}
\State{} broadcast $prop$ once \hfill \InlineComment{\cref{alg:cond:new proposal first round} calls \Call{NextRound}{}}
\State{} wait for $f+1$ \Call{MatchingMsg}{$m$, \voteMsg, \currentviewvar} with $m.data = H(prop)$ as $V$
\State{} $qc \gets$ \Call{QC}{$V$}
\State{} $propData \gets qc$
\State{} $curProposal \gets$ \Call{Msg}{\Propose, $propData$}
\State{} broadcast $curProposal$ once
\EndAs{}
\As{a node}
\State{} send $commitQC$ to \currentleadervar
\State{} set $\blameTimer(\currentviewvar)$ to $8\delay$\label{alg:protocol:view-change:short-circuit}
\EndAs{}

\On{valid \newViewProposal{} msg $\land \currentroundvar = 1$}\label{alg:cond:new proposal first round}
\State{} broadcast $prop$ once
\If{$prop.data$ extends highest block in $prop.data.status$}
\State{} $data \gets H(prop)$
\State{} $voteMsg \gets$ \Call{Msg}{\voteMsg, $data$}
\State{} broadcast $voteMsg$
\State{} $\blameTimer(\currentviewvar) \gets 6\delay$\label{alg:protocol:view change round 2 blame}
\State{} call \Call{NextRound}{}
\EndIf{}
\EndOn{}

\On{valid \Propose{} msg $\land \currentroundvar = 2\land$ valid QC in $msg.data$}
\State{} broadcast $msg$
\State{} call \Call{NextRound}{} \hfill \InlineComment{Go to steady-state}
\EndOn{}
\EndProcedure{}

\Phase{Commit Rule}\hfill \InlineCommentText{Anytime}
\On{time out of $\commitTimer(block)$}\label{alg:protocol:commit-rule}
\State{} update \commitBlock{} to $block$
\State{} commit block and its parents
% \State{} update \commitBlock{}
\EndOn{}
\end{algorithmic}
\end{multicols}
\end{algorithm*}

%%% Local Variables:
%%% mode: latex
%%% TeX-master: "../main"
%%% End:

\subsection{\protocol{} --- Steady State}\label{sec:steady state}

In the steady-state, the leader proposes blocks for every round.
These blocks are broadcast to all the nodes (\cref{alg:protocol:steady state block stream}).
The leader continuously streams proposals allowing the protocol to use all available bandwidth.
The other nodes receive the proposal before the timer (\cref{alg:protocol:blame-np}) runs out.
On receiving a new block, all nodes also update their locked block \lockBlock{}.
All nodes broadcast this proposal to all the other nodes (\cref{alg:protocol:steady state relay broadcast}).

\italhead{Committing}
For a block $B$ to be committed~\cref{alg:protocol:commit-rule,alg:protocol:commit-set}, the node waits for $4\delay$ time using $\commitTimer(B)$ after broadcasting $B$ to all the nodes.
During this wait, the node ensures that it does not hear any equivocating blocks for that view and round.
When the $\commitTimer(B)$ expires, the node also tracks the highest committed block in the variable \commitBlock{}.

The $4\delay$ wait ensures that if a node commits a block at time $t$, then the nodes must have forwarded it to all the nodes at time $t-4\delay$ which was received by all the correct nodes by time $t-3\delay$.
After chain-synchronization, all nodes would have this block by time $t-\delay$.
Since, no equivocation was heard by time $t$, all the correct nodes will \emph{lock} on this block, i.e., the \lockBlock{} of all correct nodes will always extend $B$.
The view-change ensures that the first block in the new view always extends the highest committed block of all correct nodes, thus ensuring safety for $B$ even if a view-change occurs.
We discuss the details of this in the description of the view-change protocol.

\italhead{Voting in the head}
In the steady-state, by waiting for $4\delay$ time before committing a block $B$ and not hearing any equivocating blocks, a node has implicitly collected votes for $B$ from all the correct nodes.
We describe this as \emph{voting in the head}, as opposed to explicit quorum certificate construction.
These implicit votes are made explicit during a view-change.
On a high-level, in the view-change protocol, the correct nodes send their vote for blocks that extend their locked block \lockBlock{}.
Due to hash-chaining of blocks, a vote for any child block is also a vote on all its ancestors, and thus all the committed blocks obtain explicit votes during the view-change.

Firstly, this results in $O(1)$ signing operations per node for every committed block in the steady-state, which contributes to energy-savings.
Secondly, since this is a re-broadcast of the leader's proposal and not broadcast of explicit vote messages, in partially connected networks, it results in a flooding of only one message reducing the communication complexity of our protocol.

\subsection{\protocol{} --- View Change}\label{sec:prot view change}

In \protocol, a view change can occur if the leader of a view
\begin{enumerate*}[(i)]
    \item allows $\blameTimer$ to time out for at least $1$ correct node,
    \item equivocates in a round with two blocks at the same height, or
    \item does not correctly extend the highest committed block during the view change (rounds $1$ and $2$)
\end{enumerate*}.

If a correct node $\nodei{i}$ does not receive the next proposal on time, it sends a blame message (\cref{alg:protocol:blame-np}).
If no correct node hears a block then there will be $n-f>f$ blames, and a view change will occur.
But, if a correct node $\nodei{i}$ blames, it may not always result in a view-change, as less than $f+1$ correct nodes may send this, but it is still safe.
If other correct nodes hear a proposal for that round, then they will forward it to node \nodei{i} within \delay{} and this results in a reset of the blame timer (\cref{alg:protocol:blame-reset}).

A leader $L$ can equivocate by sending \block{} and $\block'$ for any round (not just the latest round).
From bounded-synchrony, we can ensure all correct nodes will hear it in time \delay{}.
On observing two equivocating (conflicting) proposals (\cref{alg:protocol:blame-eq}), all correct nodes $n-f>f$ send blame messages and also cancel their commit-timers (\cref{alg:protocol:commit-cancel-eq}) to preserve safety.

\italhead{Quitting the View}
On receiving $f+1$ blame messages for \currentviewvar{} (\cref{alg:protocol:quit-view1}), the nodes build a certificate and broadcast it, signaling other correct nodes to quit the view \currentviewvar{} (\cref{alg:protocol:quit-view2}). 
After sending the blame certificate, correct nodes wait for $\delay$ to ensure that all correct nodes quit the old view $v$.

After quitting the view \currentviewvar, the nodes first obtain certificates for their highest committed block \commitBlock{}.
They do this by broadcasting their \commitBlock{}.
All correct nodes vote for other's \commitBlock{} if it does not conflict with the node's local \commitBlock{} and \lockBlock{}.
As we show in the proofs, this will always be true for all correct nodes.

The nodes then broadcast their highest certificate to all the nodes.
The other node's update their local highest certificate if the newly received certificates are longer and does not conflict with their local \lockBlock{}.
This ensures that no matter whose certificate the (potentially Byzantine) leader picks along with $f$ other Byzantine server's certificates, it must extend the highest committed block of all the correct nodes.
The final highest certificate is input to the next function \Call{NewView}{}.

In this step, the nodes wait $5\delay$ time before starting the next view to ensure that all correct nodes have sufficient time to collect a certificate for their \commitBlock{}.
Say, the first correct node quits the view \currentviewvar{} at time $t$.
By time $t+\delay$, all correct nodes will quit the view \currentviewvar{}.
All correct nodes will have broadcast their \commitBlock{} by this time, which will reach all correct nodes by time $t+2\delay$, which will have been chain synchronized by time $t+4\delay$.
By time $t+5\delay$, all correct nodes will have collected $f+1$ votes for their \commitBlock{}.

\italhead{Starting the New View}
In this function, all correct nodes send their certificates for \commitBlock{} to the new leader.
The new leader collects certificates from all the nodes and proposes a block containing $f+1$ such certificates.\footnote{This step can be optimized to reduce the size, if there are $f+1$ certificates for the same block.}
The nodes vote for this proposal if it extends the highest block among all the certificates.

A correct leader can always build a valid proposal.
Say, the leader is the first node to enter the new view at time $t$.
By time $t+\delay$, all correct nodes will have entered the new view and sent their certificates.
These reach the leader by time $t+2\delay$, and accounting for chain synchronization, the leader can construct a proposal by time $t+4\delay$.
Then it will propose a \newViewProposal{} message for the round $1$ in the new view.
It then builds a quorum certificate and proposes the round $2$ message, after which it transitions to the steady state.

While the new leader is collecting the highest certificates from all the other correct nodes, we need to ensure that the other correct nodes wait for sufficient time so that they do not blame an correct leader.
Otherwise, it can result in liveness problems, as the system changes views all the time.
Let the earliest node enter the new view $v+1$ at time $t$.
The latest correct node (possibly the leader) will enter the view $v+1$ by time $t+\delay$ (\cref{alg:protocol:quit view wait,alg:protocol:new view wait}).
The leader will receive the latest locked blocks from all the correct nodes by time $T+5\delay$ and it will propose a block at this time which will reach all the correct nodes by time $T+6\delay$.
Adding $2\delay$ time for chain synchronization, if the leader is correct, it suffices to wait for $8\delay$ time in \cref{alg:protocol:view-change:short-circuit}.
By time $T+9\delay$, the leader can collect votes from all correct nodes and propose.
This proposal will reach all the correct nodes by time $T+10\delay$ and time $T+12\delay$ after synchronization.
However, some correct nodes may have started their timer for the proposal in round $2$ of the new view at time $T+6\delay$.
Therefore, for the second round, a timer of $6\delay$ will suffice in \cref{alg:protocol:view change round 2 blame} to prevent blaming a correct leader.

If the new leader is Byzantine, as long as it extends one correct node's commit certificate, safety is guaranteed.
The intuition is that this certificate is at least the highest committed block from the previous view(s).
If the new leader does not make progress in round $1$ or fails to produce a valid quorum certificate in round $2$, then the timers are triggered, and another view-change protocol begins.
This preserves liveness in the protocol.

The drawback of the savings during the steady-state is that during the view-change, all the nodes need to generate certificates for their committed blocks.
In this step, the nodes convert their votes in the head to explicit votes.

\subsection{Discussion}
In this section, we present optimizations, analysis, and general discussions about protocol and model.

\parhead{Equivocation scenario speedups}
We can speed up the view change in the event of an equivocation by skipping the wait to construct a quorum certificate. 
This follows from the fact that all correct nodes will vote for the equivocation as the signatures must match for all the correct nodes. 
We can thus save communication in this phase.

\parhead{Signature verification optimization}
Cryptographic operations are typically expensive. 
Observe that the protocol in the blame phase can work without all nodes verifying the equivocation from the leader. 
For systems with more than $2f$ nodes, say $n = 2f+c$, it is sufficient for $2f+1$ nodes to verify the signature and broadcast a blame on efficient media such as TLS if the leader equivocates. 
If a node receives $f+1$ such messages, then the node knows that at least one node has detected the equivocation by the leader and can send a \blameMsg{} message.

\parhead{Batching optimization}
Batching techniques often aid in amortizing the cost of consensus. In \protocol{} protocol, the nodes can optimistically pre-commit to the block received from the neighbors without signature checks. 
After $c$ such rounds, the nodes can initiate a checkpoint protocol where a full SMR protocol is initiated. 
If the leader was correct, this saves a significant amount of energy for the system. 
If the leader was faulty for all $c$ rounds, the nodes need to recover blocks for those rounds. 
The worst-case scenario is the same as the standard \protocol{} protocol. 
However, we have a significant energy improvement in the best-case.

\parhead{Add commands in rounds $1$ and $2$}
For ease of exposition, client commands \Cmds{} are not included in rounds $1$ and $2$ of the view change.
This can be added to improve throughput slightly.

\parhead{On optimistic responsiveness} For applications such as CPS, we assume that the real network speed $\delta$ satisfies $\delta\approx \delay$ due to the scheduling of messages by the nodes.
However, if this is not the case (i.e., $\delta\ll \Delta$), then Sync HotStuff~\cite{abrahamSyncHotStuffSimple2020}, OptSync\cite{shresthaOptimalityOptimisticResponsiveness2020}, and rotating BFT SMR~\cite{abrahamOptimalGoodCaseLatency2022} support \emph{optimistic responsiveness}, a mode where the commit latency is $2\delta$ (using certificates for every block). 
In such settings, our \protocol{} presents a trade-off: by providing a block period of $0$ and being throughput-centric (while being energy-efficient) with Sync HotStuff or other trusted control node based mechanisms who are latency-centric.

\paragraph{Extensions to BA and BB} It is tempting to use the technique of lack of equivocation within $4\delay$ to implement a Byzantine Broadcast protocol~\cite{peaseReachingAgreementPresence1980}. 
This is non-trivial since a Byzantine leader can prevent termination in a run by sending equivocations so that only some correct nodes terminate. 
The correct nodes cannot terminate even when the leader is correct, as this run is indistinguishable from the previous run. 
Using termination certificates, we can extend this work to implement BB and BA similar to Abraham et al.~\cite{abrahamSynchronousByzantineAgreement2019}. 
The benefits of such an approach in this case is limited to the reduction of usage of certificates in the first iteration only.

\parhead{Taking advantage of control planes}
For instance, CPS nodes may be managed by online trusted control nodes and
if such a trusted node is available, we can use it during the view-change to simplify and improve energy-efficiency.
Most SMR protocols can drop liveness and only preserve safety, i.e., the nodes will ensure that if two correct nodes output blocks $B$ and $B'$ at height $h$, then $B=B'$. By forgoing liveness, i.e., if two correct nodes output two blocks $B$ and $B'$ at height $h$ then $B=B'$; this does not state that all correct nodes \emph{will} output $B$ at height $h$, when the leader is correct, we can guarantee safety. When the leader equivocates or does not make progress, the nodes can simply stop and wait for the control nodes to change the view. This can improve energy-efficiency.

\parhead{Security Analysis}
We present the detailed security analysis in 
\iffull{}\cref{sec:security}\fi{}
\ifsubmission{}\cite{fullversion}\fi{}.

\parhead{Note on optimizations and security}
It is easy to show that these optimizations do not violate the safety and liveness of the system. 
We can easily extend the proofs provided in 
\iffull{}\cref{sec:security}\fi 
\ifsubmission{}\cite{fullversion}\xspace{}\fi{} to show that the protocols with the optimizations are still
secure.

%%% Local Variables:
%%% mode: latex
%%% TeX-master: "main"
%%% End:

% Analysis

\section{Energy Analysis}\label{sec:analysis}

In this section, we formalize the energy-measurement model, argue for the need to optimize the common-case and push complexities to the worst-case.
We also develop a framework to compare protocols that allows us to make better decisions.

\parhead{The need for best-case optimality}
Let $\vec{X}$ define a vector, which consists of system parameters such as $\vec{X} = (n, f, m, S$, $R, \sigma_s, \sigma_v)^T$, where $n$ is the total number of nodes tolerating $f$ faults; $m$ is the maximum supported payload size, e.g., the size of \Cmds{}; $S$ and $R$ are the costs to send and receive per byte; and $\sigma_s$ and $\sigma_v$ are the costs to sign and verify digital signatures.

Let $\psi$ denote the energy cost function of a leader based protocol per unit of consensus (typically a block of client requests).
The function takes a column vector $\vec{X}$ of energy-costs of primitives and computes the energy cost of the protocol.
An example is: $\psi(\vec{X}) = c_1m + c_2n + c_3mn + c_4mnS + c_5mnR + c_6\sigma_s + c_7n\sigma_v$ for some constants $c_1$, $c_2$, $c_3$, $c_4$, $c_5$, $c_6$ and $c_7$.
This function ignores the common state machine replication costs such as verifying the semantic validity of client requests, checking if the client has correctly generated the requests, execution of a request, etc.
When comparing two different protocols, we drop $\vec{X}$, and just use $\psi$.
In later section, we use such functions to model protocols and perform energy analysis.

We denote a particular protocol by placing its name in the superscript (e.g., $\optst$), the best-case energy-cost (without faults) of the protocol using subscript $B$ such as $\optt_B$, the worst-case (with faulty leader and faults)\footnote{By worst case, we ignore a denial-of-service adversary that spams invalid messages. Such adversaries are out-of-scope of this work and addressing them is an orthogonal problem~\cite{abrahamSynchronousByzantineAgreement2019,abrahamSyncHotStuffSimple2020,abrahamOptimalGoodCaseLatency2022,chaiByzantineFaultTolerant2014}. For example Elmamy et al.~\cite{elmamySurveyUsageBlockchain2020} classify DoS and jamming as a network-layer concern and not application layer (where we design protocols).} energy cost
and the view change energy costs (costs to change the leader) using the subscripts $W$ and $V$ respectively such as $\optt_W$ and $\optt_V$.
We assume $\optt_V = \optt_W - \optt_B$\footnote{Note that we do not assume that $\optt_V > 0$. There exist protocols where the protocol aborts early in case of Byzantine failures such as Tendermint~\cite{buchmanLatestGossipBFT2019}, which results in $\optt_V < 0$.}\footnotemark{}.

\footnotetext{What about cases that are neither best or worst-case? Examples of such cases are nodes that send incorrect messages, spam messages, etc. We ignore such cases as they are an orthogonal problem common to all SMR protocols and can be resolved by heuristics such as reputation or banning mechanisms for such peers.}

We wish to design a protocol $\optt$ that is more energy efficient than another protocol $\opto$.
Trivially, if $\optt_B \le \opto_B$ and $\optt_W \le \opto_W$, then we have already achieved our goal.
Therefore, let us consider the case where $\optt \le \opto$ is not true always.
Let $N$ be the total number of blocks agreed upon in which $W$ worst-case events occur which result in $V = W$ view changes.
Then using $\p{N-V}\optt_B + V\p{\optt_W} \le \p{N-V}\opto_B + V\p{\opto_W}$, we get
  $\nuf = \tfrac{V}{N} \le \frac{\opto_B-\optt_B}{\optt_V-\opto_V}$.
Let $\nuf =\tfrac{V}{N}$ be the ratio of view changes $V$ to total number of
blocks $N$.
The fraction of \emph{good} (best-case) runs are $1-\nuf$
with $0 \le \nuf \le 1$.

\parhead{(Un)Favorable conditions}
There are two regions of solutions for the above inequality when $\nuf\ge 0$:
\begin{inparaenum}[(i)]
	\item $\optt_B > \opto_B$ (worse best-case) with $\optt_V < \opto_V$ (better
	worst-case), which we term as the \textit{worst-case optimal} solution set,
	and
	\item $\optt_B < \opto_B$ (better best-case) with $\optt_V > \opto_V$ (worse
	worst-case), which we term as the \textit{best-case optimal} solution set.
\end{inparaenum}

This means that we want to either make the protocol worst-case efficient (region i) or best-case efficient (region ii) for $\optt$ to be more energy efficient than $\opto$.
In bounded-synchrony, since the number of faults are bounded by $f$, $V$ is also bounded by $f$.
Therefore, for our setting, asymptotically we need a best-case optimal $\optt$, i.e.,\ the \textit{best-case optimal} solution set.
We also want $\optt_V-\opto_V$ to be as close as possible.
In other words, if $\optt_V$ is much better than $\opto_V$, then the number of best-case events $N-V$ we need to run the protocol will be large, in which case a worst-case optimal solution set may be better.
\[ N \ge V\p{\dfrac{\optt_V-\opto_V}{\opto_B-\optt_B}} \]

For non-synchronous systems, observe that the adversary can choose which phase the protocol spends the most time on.
As a result, an adversary with the goal of thwarting energy-efficiency can always do so by ensuring that the protocol is spending most of its time in the non-optimal phases.
Hence, we restrict ourselves to bounded-synchronous systems in this work and leave the optimization problem for non-synchronous systems an interesting open problem for the future.

For any network setting, Dolev and Strong~\cite{dolevAuthenticatedAlgorithmsByzantine1983} showed in \cref{thm:DS lower bound} that in the worst-case it is impossible to avoid the $f+1$ round lower bound.
For our synchronous setting, the network is always reliable within the $\delay$ parameter.
In a leader based protocol we will have at most $f$ Byzantine leaders.
Thus, intuitively, protocols must be designed to be best-case optimal.

\begin{theorem}[Lower bound on BA~\cite{dolevAuthenticatedAlgorithmsByzantine1983}]\label{thm:DS lower bound}
	Byzantine agreement using authentication can be achieved for $n$ processors
    with at most $f$ faults within $f+1$ phases, assuming $n > f+1$.
\end{theorem}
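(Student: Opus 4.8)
The plan is to prove the achievability claim by exhibiting and analyzing the classical Dolev--Strong broadcast protocol, from which $n$-party agreement follows by parallel composition. First I would reduce Byzantine agreement to single-sender Byzantine broadcast: run $n$ instances in lock-step, one per potential sender, and have each correct node decide its output from the $n$ extracted values via a fixed deterministic rule (e.g., majority with a default). Since the instances proceed simultaneously, the phase count is inherited from a single broadcast. For the broadcast primitive with a designated sender holding input $v^{\star}$, the protocol maintains at each node $\nodei{i}$ an \emph{accepted} set $\mathrm{acc}_i$, initially empty. In phase $1$ the sender signs $v^{\star}$ and sends it to all. Thereafter, in each phase $r$ with $1 \le r \le f+1$, node $\nodei{i}$ adds a value $v$ to $\mathrm{acc}_i$ the first time it receives $v$ carrying a valid chain of $r$ distinct signatures that includes the sender's; upon adding $v$, it appends its own signature and forwards the chain to all nodes in phase $r+1$. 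After phase $f+1$, node $\nodei{i}$ outputs the unique element of $\mathrm{acc}_i$ if $|\mathrm{acc}_i| = 1$, and a fixed default otherwise.

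The core of the argument is a propagation lemma: \emph{if any correct node adds $v$ to its accepted set by the end of phase $f+1$, then every correct node also has $v$ by the end of phase $f+1$.} I would prove this by cases on the phase $r$ in which the first such correct node $\nodei{i}$ accepts $v$. If $r \le f$, then $\nodei{i}$ forwards $v$ in phase $r+1 \le f+1$, and by synchrony every correct node receives a valid $(r+1)$-length chain and accepts $v$ by phase $r+1$. The delicate case is $r = f+1$: here the chain that $\nodei{i}$ received carries $f+1$ \emph{distinct} signatures, so by the pigeonhole principle---using that at most $f$ nodes are faulty---at least one signer $\nodei{k}$ is correct. A correct signer only signs upon accepting, and strictly before the final phase; hence $\nodei{k}$ accepted $v$ in some phase $\le f$ and forwarded it to all, placing us back in the first case. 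This step is where authentication is essential and is the main obstacle: I must invoke unforgeability to guarantee that a genuine correct signer really appears in the chain (rather than a forged signature), and I must track the timing carefully so that this correct signer's forward still reaches all nodes within the $f+1$ phase budget.

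With the propagation lemma in hand, the three required guarantees follow quickly. \textbf{Agreement} holds because all correct nodes end phase $f+1$ with identical accepted sets, hence apply the same deterministic rule to the same set. \textbf{Validity} holds because, when the sender is correct, it sends $v^{\star}$ to all in phase $1$ so every correct node accepts $v^{\star}$, while unforgeability of the sender's signature prevents any correct node from ever accepting a value the sender did not sign; thus $\mathrm{acc}_i = \{v^{\star}\}$ and every correct node outputs $v^{\star}$. \textbf{Termination} is immediate, since every node halts after the fixed $f+1$ phases. Composing the $n$ parallel broadcasts preserves all three properties and keeps the phase count at $f+1$, establishing the claim under $n > f+1$. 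The matching impossibility---that no authenticated protocol can solve the problem in fewer than $f+1$ phases---is the companion lower bound invoked in the surrounding discussion; I would establish it separately by a standard indistinguishability/chaining argument, but it is not needed for the stated achievability direction.
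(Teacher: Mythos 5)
There is nothing to compare against inside the paper: \cref{thm:DS lower bound} is imported verbatim from Dolev and Strong~\cite{dolevAuthenticatedAlgorithmsByzantine1983} and the paper never proves it, only invokes it to motivate best-case optimization. So your proposal must be judged against the classical argument, and its core is exactly that argument, correctly reproduced: the signature-chain protocol, the propagation lemma, and in particular the pigeonhole step for the delicate phase-$(f+1)$ case --- a valid chain of $f+1$ distinct signatures must contain a correct signer, who therefore accepted in some phase $\le f$ and relayed to everyone within the phase budget. That is the standard and correct proof of the achievability claim.

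The genuine gap is in your outer reduction, not in the core. In Dolev--Strong's terminology --- which this theorem statement adopts, as the hypothesis $n > f+1$ (rather than $n > 2f$) signals --- ``Byzantine agreement'' already \emph{is} the single-transmitter problem: all correct processors output the same value, and that value is the transmitter's whenever the transmitter is correct. Your broadcast protocol therefore proves the theorem by itself; no composition is needed. The composition you bolt on --- $n$ parallel broadcasts followed by ``majority with a default'' --- targets multi-input consensus with strong unanimity, and under the stated hypothesis it fails: $n > f+1$ permits $f \ge n/2$, in which case the $f$ instances with faulty senders can outnumber the $n-f$ instances with correct senders, so the majority rule need not return the common input of the correct nodes and validity breaks. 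This cannot be patched by a cleverer decision rule: consensus with strong validity is impossible for $f \ge n/2$ even with signatures, so no reduction of this shape can work under the theorem's hypothesis. The fix is simply to delete the reduction and let the broadcast protocol carry the whole proof (or, if you want the $n$-sender version, state the conclusion as interactive consistency --- agreement on the vector of per-sender values --- rather than as majority-based consensus validity).
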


This bound reiterates that we need to make protocols energy efficient in the
best-case scenario with lower view-change costs to maximize $f_e$, compared to
the baseline justifying the fit of \protocol{} when compared to other protocols.

In practice, protocols may exist where a view-change may exceed the energy available (say the energy available in a fully-charged battery).
In such conditions, the worst-case optimal solution might be more practical, or energy-efficiency may not possible for the given $f$ faults and the given energy-budget.

\parhead{Adversary for Energy-Efficiency}
Let $f_s$ be Byzantine nodes that attack safety.
These may send incorrect messages, but send messages when they must, i.e.,\ they do not attack liveness.
Let $f_c$ be Byzantine nodes that attack liveness (they do not send messages, but if they do, they will send the correct messages).
For safety, we have $n > f_s$, and for liveness we have a necessary condition $n>2f_c$.

We explore a similar relationship for energy faults $f_e$, i.e.,\ faults that wish to drain the energy of correct nodes.
If we want a protocol to be energy fault-tolerant, then the number of faults $f_e \gets f$ should be such that $f_e$ of the worst-cases for $\optt$ should still be better than $f_e$ of the worst-cases for $\opto$.
Solutions for $f$ that satisfy $\optt(f,\cdot) \le \opto(f,\cdot)$ dictate the energy bound.
This is true for example, when $\optt$ and $\opto$ are monotonic in $f$.
% i.e.,\ $\pd{\optt}{f} \le \pd{\opto}{f}$.
For our setting, let $f_e$ Byzantine nodes cause the worst case scenario to occur $f_e$ times after which finally a best-case scenario occurs.
Then using $f_e\cdot\opt_W +\opt_B \le \optst$, we obtain
\begin{align}
f_e &\le \dfrac{\optst - \opt_B}{\p{\opt_B + \opt_V}}
\tag{EB}\label{eq:energy bound}
\end{align}

In many applications, despite the main protocol tolerating up to $50\%$ faults, the actual protocol might tolerate $f<f_{\max}$ due to connectivity bottlenecks or other reasons.
Obtaining $f_e$ in relation to another protocol helps to choose the more energy-efficient protocol.

Our analysis allows administrators and protocol stake-holders and deployers to model protocols and use the application details such as expected number of faults, communication and computation primitives used to make energy-aware protocol choices.

%%% Local Variables:
%%% mode: latex
%%% TeX-master: "main"
%%% End:

% Experiments
%

\section{CPS and Performance Analysis}\label{sec:results}

In this section, we consider the Cyber-Physical System (CPS) context to evaluate the energy efficiency of our protocol.
Our choice of CPS is motivated by two different observations. a) The energy-starved nature of conventional CPS nodes makes it essential to ensure that the energy overhead of a BFT protocol is minimal on the system. b) Conventional machines run a heavy operating system with millions of CPU instructions. Calculating the energy consumed by the protocol from the machine's overall energy consumption requires us to separate the operating system's inherent energy consumption, which is prone to a lot of noise. This noise masks the energy consumed by the BFT protocol. Hence, we consider low-powered CPS nodes, which run a lightweight OS with much lower energy consumption. Consequently, the noise contributed by the OS to the energy readings of the device is also lower, allowing us to measure the energy of the protocol accurately.

\subsection{Protocol Energy Comparison}

\begin{figure}[!t]
    \centering
    \includegraphics[width=0.90\linewidth]{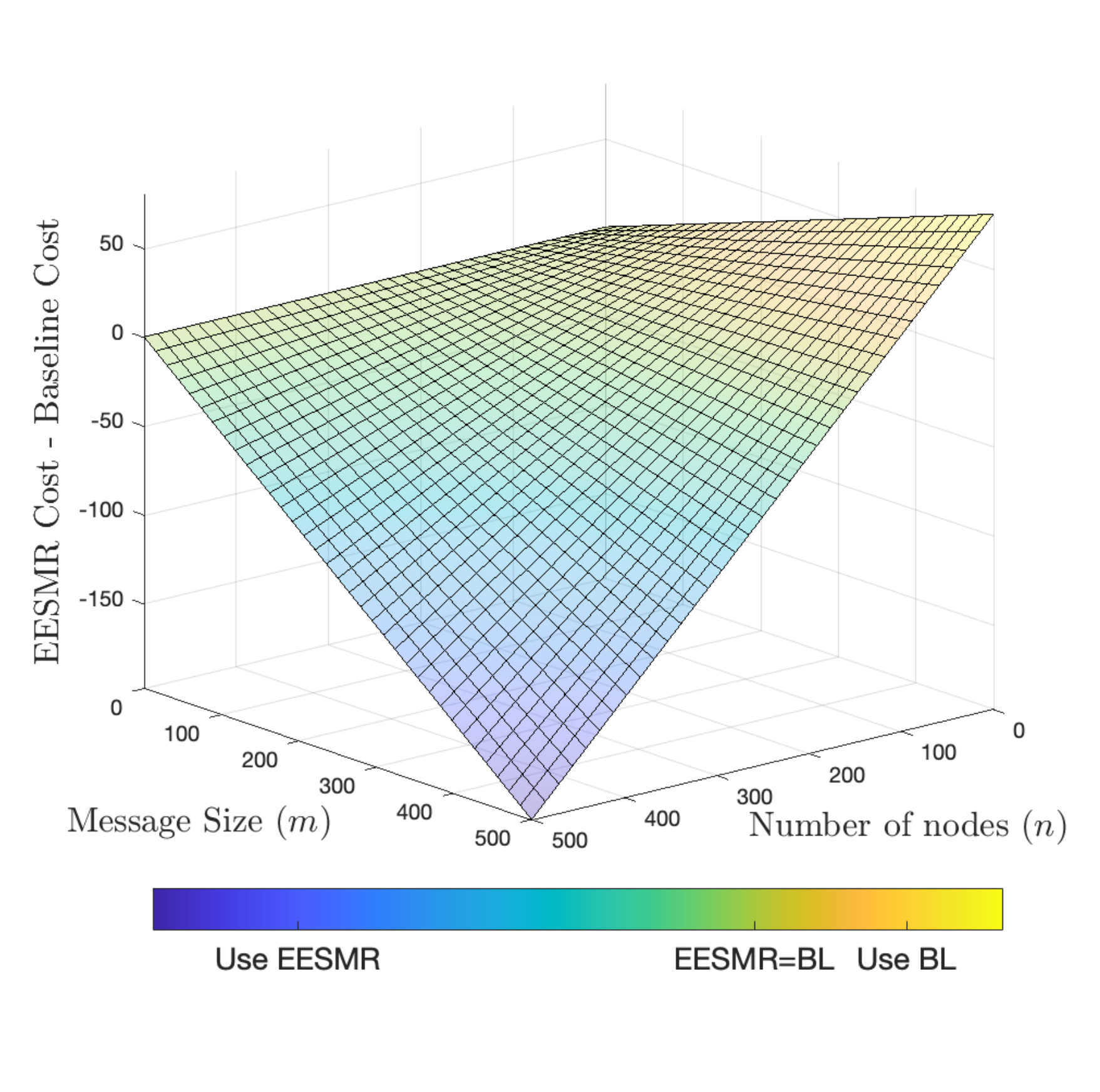}
    \vspace{-7mm}
    \caption{\ifsubmission\small\fi{\textbf{An example feasible region plot} for different message sizes $(m)$ and number of nodes $(n)$. This plot is generated for RSA-1024, and the nodes communicate with each other using Wi-Fi and the external trusted node uses 4G. \textbf{BL} is Baseline.} The z-axis is the difference between \protocol{} energy cost and the baseline. When negative,	\protocol{} is more energy-efficient.}\label{fig:e2c feasible region}
	\ifsubmission\Description{}\fi
\end{figure}

In the previous sections, we designed our \protocol{} to be more energy-efficient than the state-of-the-art protocol Sync HotStuff.
However, every application is unique and may contain several communication media along with trusted control nodes.
\protocol{} may not be the most optimal solution in all of these scenarios.
In this part, we present analysis techniques to compare energy-efficiency of protocols.

\paragraph{Comparison with trusted-baseline}
We compare our work with a trusted-baseline protocol using theoretical tools.
In this baseline protocol, we assume the existence of a trusted node.
Trivially, if the trusted node uses the same communication network as the other nodes, then the trusted-baseline is always energy-efficient.
We consider the case where the trusted node is on a different (more expensive) communication medium than the communication medium among nodes (e.g., satellites to communicate with the trusted node and ground-stations for local communication among other nodes).
An example of such scenario is communication between CPS/IoT devices and the control servers.

The baseline protocol assumes that all the CPS nodes are directly connected to the trusted node using the expensive medium and not use the links between the CPS nodes.
The baseline still assumes that $f$ out of $n$ CPS nodes are Byzantine.

We analytically count operations and build equations for \optst, $\opt_B$ and $\opt_V$ in MATLAB software.
We use the energy measurement values observed by related works and from our experiments in \cref{sec:exp:primitives} in our analysis.
We present an example feasible region in \cref{fig:e2c feasible region} which shows when \protocol{} is favorable over a trusted-baseline protocol (BL) assuming an external trusted party over a slightly more expensive medium 4G, while the CPS nodes use Wi-Fi to run \protocol{}.
Such modeling is useful for example, to choose the more efficient protocol for a given set of parameters.
This analysis can be extended to consider our crash-fault tolerant version, or our steady-state can be combined with the trusted-baseline only for the view-change to improve the energy-efficiency.

\subsection{CPS System Model}
\parhead{Multicasts}
We take advantage of the multicasts available in the CPS setting by modeling the network as hypergraphs (\iffull{}\cref{sec: fault tolerance result}\else{}\cite{fullversion}\fi{}).
The wireless multicast for instance guarantees that all correct nodes hear a consistent value in a time-bounded fashion (see note on jamming).
We assume that all nodes have access to a multicast channel with at least $k$ nodes such that the system remains $f$-connected, i.e., every node can connect with $f+1$ nodes.

\parhead{Notations}
We refer to these multicast channels as \kcasts as we can multicast once to reach $k$ nodes.
These are realized using Ethernet multicasts or wireless multicasts in the CPS system.
Our protocols are updated accordingly, to use these \kcasts instead of broadcast, and an appropriate $\delay$ parameter is used to ensure that all correct nodes receive a message sent by any other correct node.
We denote by $\Din$ and $\Dout$ the number of incoming and outgoing \kcast links.

\italhead{Note on Bounded-Synchrony for CPS} The employed equivocation detection is only applicable to bounded-synchronous (CPS) networks, which
is viable in closed controlled environments~\cite{wangOptimalGeneralizedByzantine2014, wangReachingTrustedByzantine2014, zimmerlingSynchronousTransmissionsLowPower2020, ferrariEfficientNetworkFlooding2011, sahaEfficientManytoManyData2017, jiangHybridLowPowerWideArea2021, chatterjeeContextAwareCollaborativeIntelligence2021} such as remote, industry and military applications.
Notice that for asynchronous and partially synchronous network models, due to the unknown delays, an adversary can always deplete any finite energy resources of the CPS devices, making any meaningful energy analysis challenging.
The energy-efficiency analysis of such protocols during periods of synchrony is of independent interest and presents an interesting future work.

\italhead{Note on Jamming} We assume that the hyper-edges are synchronous~\cite{jiangHybridLowPowerWideArea2021,chatterjeeContextAwareCollaborativeIntelligence2021,deifAntColonyOptimization2017,wanPSFQReliableTransport2002,wagenknechtSNOMCOverlayMulticast2012,wangOptimalGeneralizedByzantine2014,wangReachingTrustedByzantine2014,zimmerlingSynchronousTransmissionsLowPower2020,ferrariEfficientNetworkFlooding2011,sahaEfficientManytoManyData2017}, and similar to the related  works~\cite{profentzasIoTLogBlockRecordingOffline2019,profentzasTinyEVMOffChainSmart2020,schillerBlockchainMSP430IEEE2020,novoScalableAccessManagement2019,abrahamSynchronousByzantineAgreement2019,abrahamSyncHotStuffSimple2020,shresthaOptimalityOptimisticResponsiveness2020,chaiByzantineFaultTolerant2014} consider network jamming to be an orthogonal problem.
Nevertheless, we note that the location of deployment such as data-centers, military networks using satellites, or remote large areas will dictate whether jamming is a part of the threat model.
If large-scale jamming is a part of the threat model for a setting, unicast links\footnote{Jamming in unicast links in wireless networks can be mitigated by Spread-Spectrum Frequency Hopping~\cite{torrieriPrinciplesSpreadSpectrumCommunication2018}.} are reliable.
The use of unicasts do not affect our claims on energy-efficiency over prior SMR protocols.

We measure and compare the costs of multicasts and unicasts and measure their reliability.
We implement our proposed protocol \protocol and the state-of-the-art protocol Sync HotStuff~\cite{abrahamSyncHotStuffSimple2020} on CPS devices and compare their energy costs.

\subsection{System setup}
We implemented our protocol and the baseline protocol Sync HotStuff in C++ and tested them on \texttt{NUCLEO-F401RE} nodes with ARM Cortex M4 $84$ MHz processors. The main board has $512$ kB of flash memory and $96$ kB of SRAM. The nodes communicate using Bluetooth Low Energy (BLE) modules. We used \texttt{Saleae Logic-Pro~8} and \texttt{INA169} sensors to measure the energy consumed by the nodes running the protocols.

\subsection{Evaluating primitives}\label{sec:exp:primitives}
We measure the energy costs of the communication and cryptographic primitives
described in \cref{sec:crypto primitives}.
\begin{table*}[htb]
\small
\caption{{\bfseries Sample energy consumption data for different media.} All
	the measurements are in milliJoule (mJ).}
	\begin{tabular}{l ccc cc cc}
		\toprule
		\multirow{2}{*}{\begin{tabular}[c]{@{}c@{}} Message \\
		Size\end{tabular}} & \multicolumn{3}{c}{BLE} &
		\multicolumn{2}{c}{4G LTE \cite{huangCloseExaminationPerformance2012}} & \multicolumn{2}{c}{WiFi}
		\\ \cmidrule(l){2-4} \cmidrule(l){5-6} \cmidrule(l){7-8}
        & Send & Recv & Multicast & Send & Recv & Send & Recv\\
        \midrule
		$256$ B & $0.73$ & $0.55$ & $0.58$ & $494.84$ & $69.54$
		& $81.2$ & $66.66$ \\
		$512$ B & $1.31$ & $1.11$ & $1.17$ & $989.68$ & $139.08$
		& $153.98$ & $123.23$ \\
		$1$  kB & $2.93$ & $2.64$ & $2.35$ & $1979.36$ & $278.17$
		& $310.54$ & $231.52$ \\
		$2$  kB & $5.91$ & $5.23$ & $4.70$ & $3958.72$ & $556.35$
        & $610.55$ & $423.58$ \\
        \bottomrule
	\end{tabular}
	\label{tab:energy_comm_primitives}
\end{table*}

\paragraph{Communication primitives}
We measured the cost of sending a message through different communication media in~\cref{tab:energy_comm_primitives}. The energy reading for BLE is the energy needed to transmit one packet of data. However, this BLE transmission does not verify if the other party received it reliably. We notice that BLE's energy requirement is two orders of magnitude lower than WiFi and three orders of magnitude lower than 4G, which is the primary reason of us using BLE in our implementation. We evaluated this mode of communication further by measuring the energy costs of sending and receiving messages of various sizes using unicasts and multicasts in BLE.
We use BLE advertisement packets as multicasts and \kcasts. We observe that these packets have payloads limited to $25$ bytes per packet by the BLE GAP specification. For large messages, we must therefore fragment them into multiple advertisements.
BLE advertisements are packets in the link layer, with no inherent capability to handle packet losses. We tackled this inherent unreliability using redundant transmissions. We conducted experiments to chart the reliability of \kcasts. The setup involved a set of BLE-enabled embedded devices placed within a $10$ meter distance of each other, which is within the range of BLE. We transmitted batches of $10,000$ packets with varied levels of redundancy and measured the number of successful packets received by the receivers. The failure rate was measured against the redundancy factor of transmission and the energy values were measured accordingly. A \kcast is considered to be a success only if all the \K receivers successfully receive the data. We repeated this experiment for different values of \K and recorded the experimental results in  \cref{fig:mc-f-l}. We observe that the failure rates exponentially decrease with the increase in the redundancy factor and energy spent per transmission. The probability of a transmission failure increases with the value of \K and the energy needed to guarantee a $99.99\%$ reliability also increases accordingly. However, for specific applications, a threshold of $99.99\%$ might not be sufficient. For such applications, the reliability can be customly set using concepts like erasure and error-correcting codes, where the amount of redundancy can be set according to the desired reliability. Based on the noise in the environment, we can measure a value $\delay$ such that the desired reliability is achieved. \footnote{It can be an interesting optimization to improve the reliability without redundant transmissions using erasure coding. However, we note that erasure coding solutions typically employ large pre-computed tables, which will not fit on the employed CPS boards without non-trivial implementation efforts.}

With the added redundancy, we experimentally observe that it takes
bounded $200$ ms to transmit a $25$ byte message with $99.99\%$ reliability over a
multicast link in BLE, with $\K=7$. BLE has a low bandwidth resulting in a high
transmission time, but it is energy efficient when compared to other
alternatives. The experiments were conducted in a laboratory with many rogue advertisement packets from other devices which increased the energy consumption of the receiver nodes. But this also mimics a practical setting due to the ubiquity of BLE.
For all further experiments, we use this setting of \kcasts that guarantees 99.99\% reliability for each k-cast link.

\begin{figure*}[t]
\begin{multicols}{3}
    \begin{subfigure}[t]{\linewidth}
    \includegraphics[width=\linewidth]{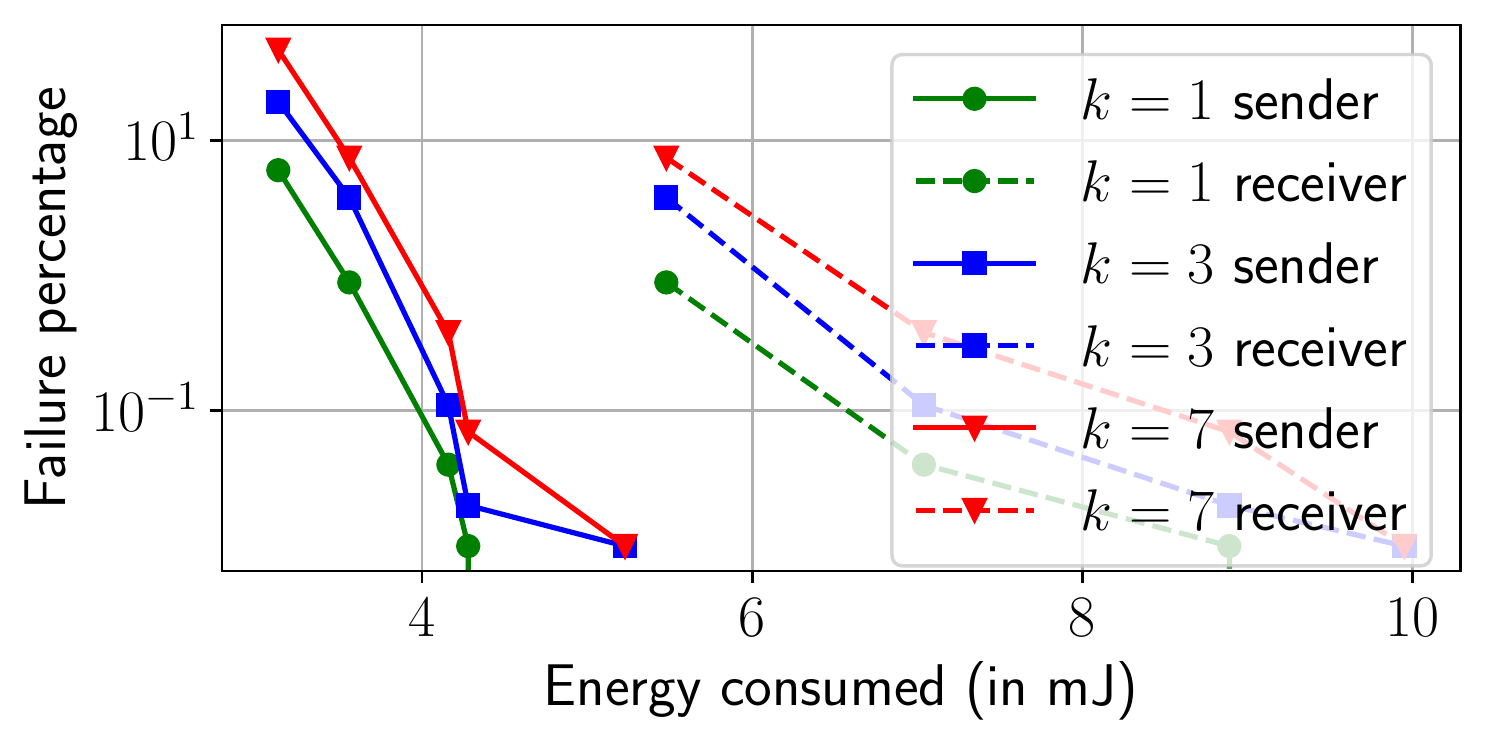}
    \caption{\ifsubmission\small\fi {\textbf{The failure rate of \kcasts} as a
    $\%$ mapped against the energy consumed by the sender and receiver. We achieve a reliability rate of $99.99\%$ by spending $5.3$ mJ
    per message by the sender and $9.98$ mJ per message for the receiver.}}
	\label{fig:mc-f-l}
    \end{subfigure}\par
    \begin{subfigure}[t]{\linewidth}
    \centering
    \includegraphics[width=\linewidth]{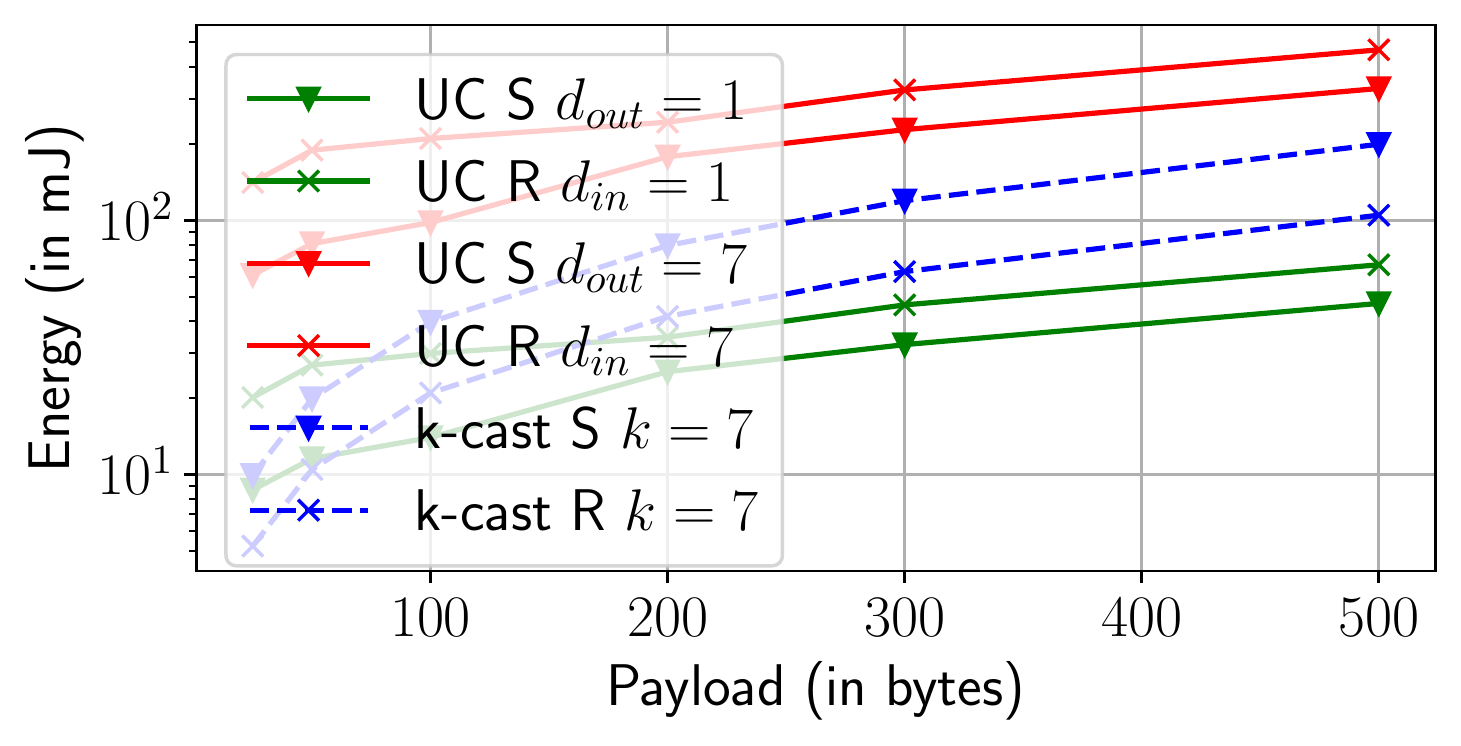}
    \caption{\ifsubmission\small\fi{\textbf{Comparing costs of unicast vs.
    multicasts}. We use a reliability of $99.99\%$ for \K-casts. UC stands for Unicast, S stands for Sender and R means receiver.}
        \label{fig: multicast vs. unicast}
    }
    \end{subfigure}\par
    \begin{subfigure}[t]{\linewidth}
    \includegraphics[width=\linewidth]{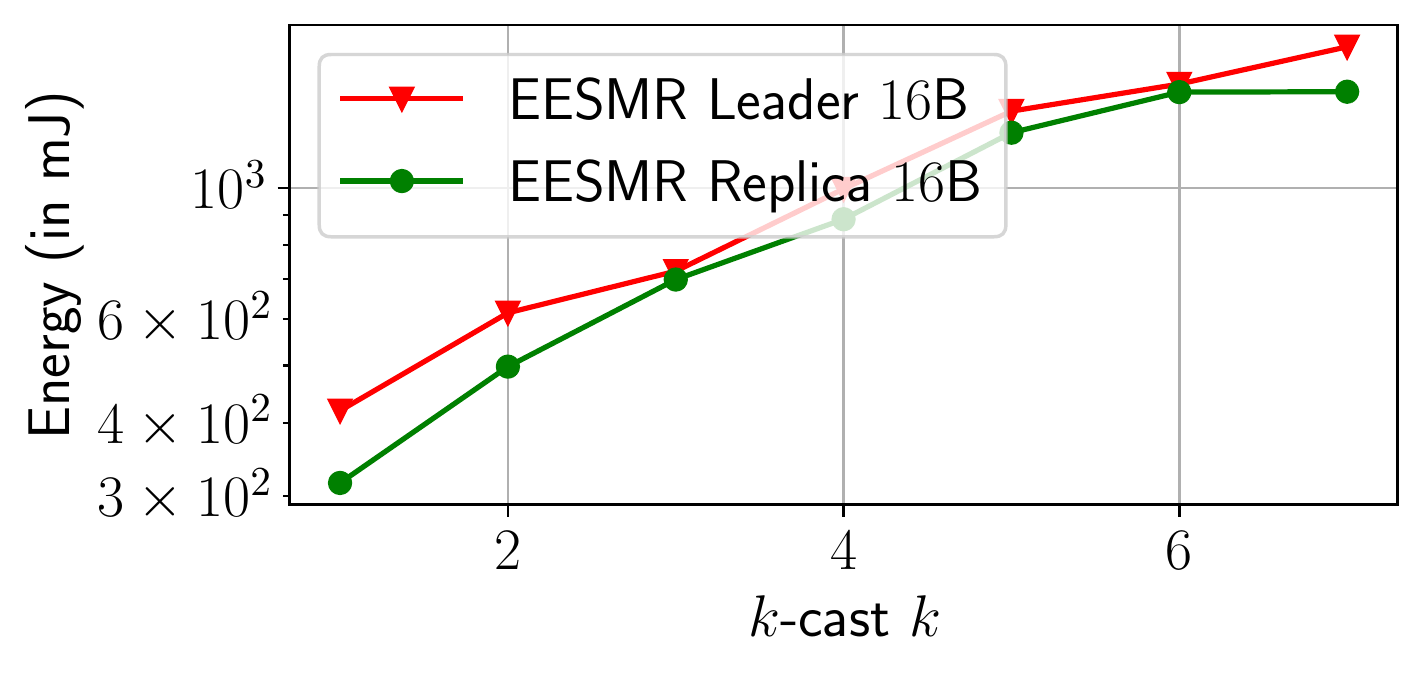}
    \caption{\ifsubmission\small\fi{\textbf{Average energy per state machine
    replication} consumed by a correct \protocol leader and other nodes and its
    variation with \K. We use $\sizeof{b_i} = 16$ bytes.}}
	\label{fig:k-e2c}
    \end{subfigure}\par
\end{multicols}
\vspace{-5mm}
\begin{multicols}{3}
    \begin{subfigure}[t]{\linewidth}
    \includegraphics[width=\linewidth]{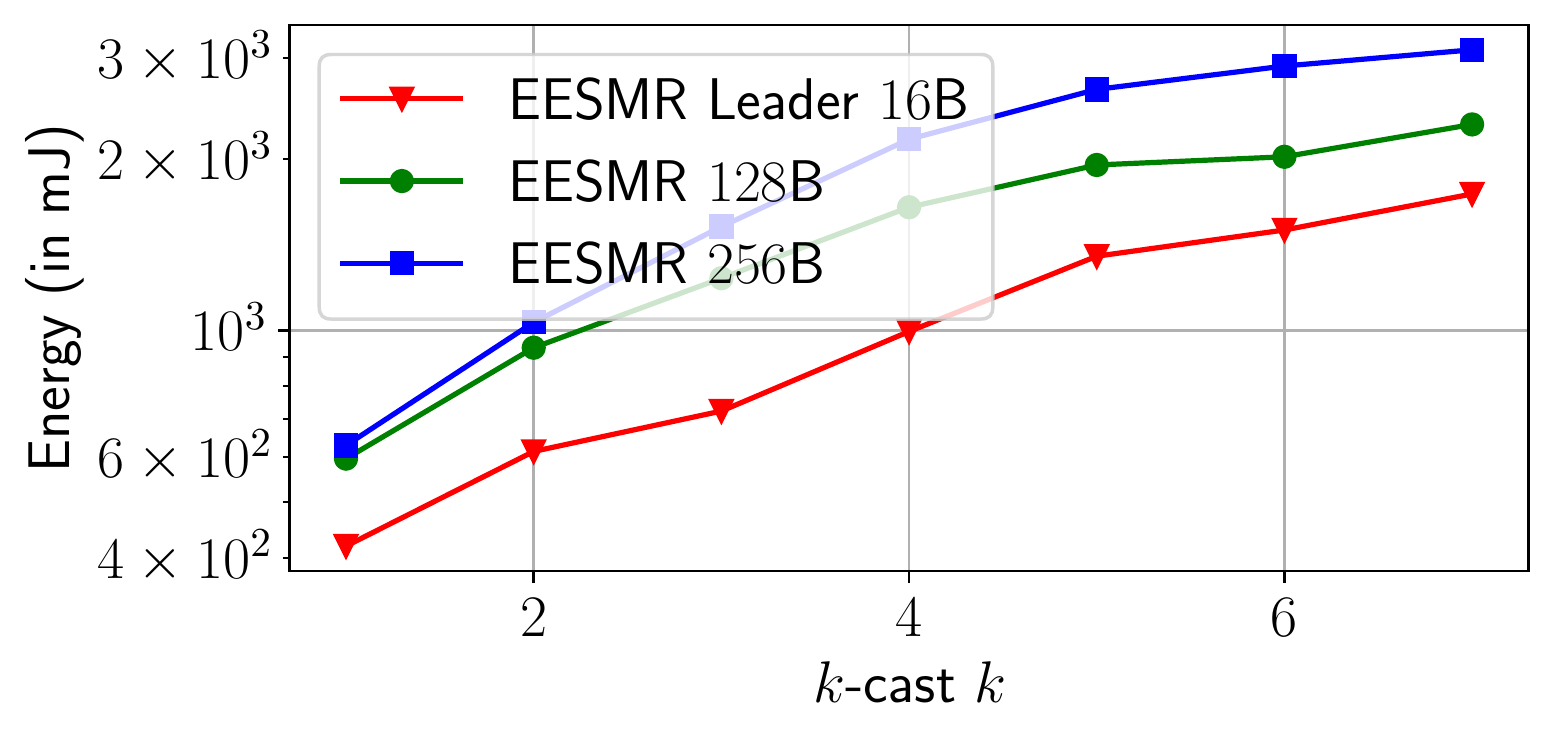}
    \caption{\ifsubmission\small\fi{\textbf{Energy consumed by a correct \protocol leader per SMR for variable block sizes} with respect to the value \K.}}
	\label{fig:e2c-bytewise}
    \end{subfigure}\par
    \begin{subfigure}[t]{\linewidth}
    \includegraphics[width=\linewidth]{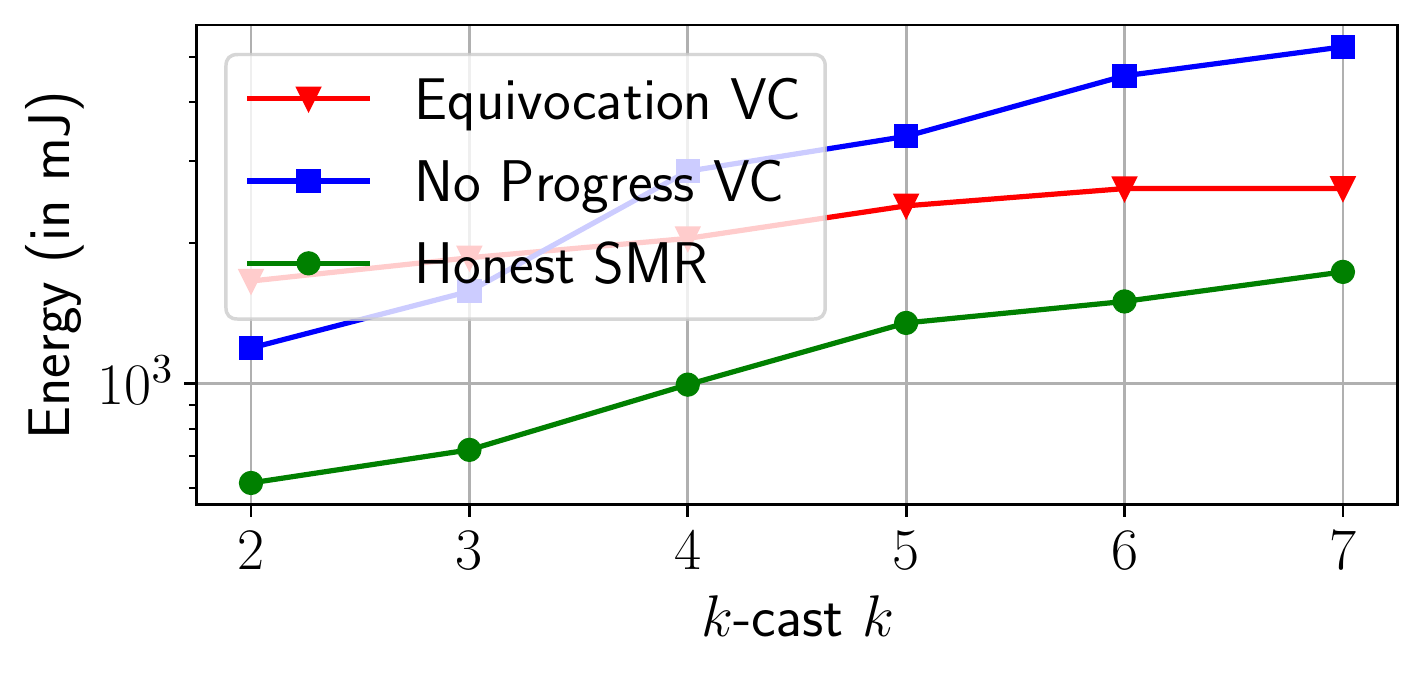}
    \caption{\ifsubmission\small\fi{\textbf{Energy consumed by \protocol leader per view change operation} for different $f$, evaluated with $n=15$ nodes. We use $\sizeof{b_i}= 16$ bytes.}}
	\label{fig:vc-e2c}
    \end{subfigure}\par
    \begin{subfigure}[t]{\linewidth}
    \includegraphics[width=1\linewidth]{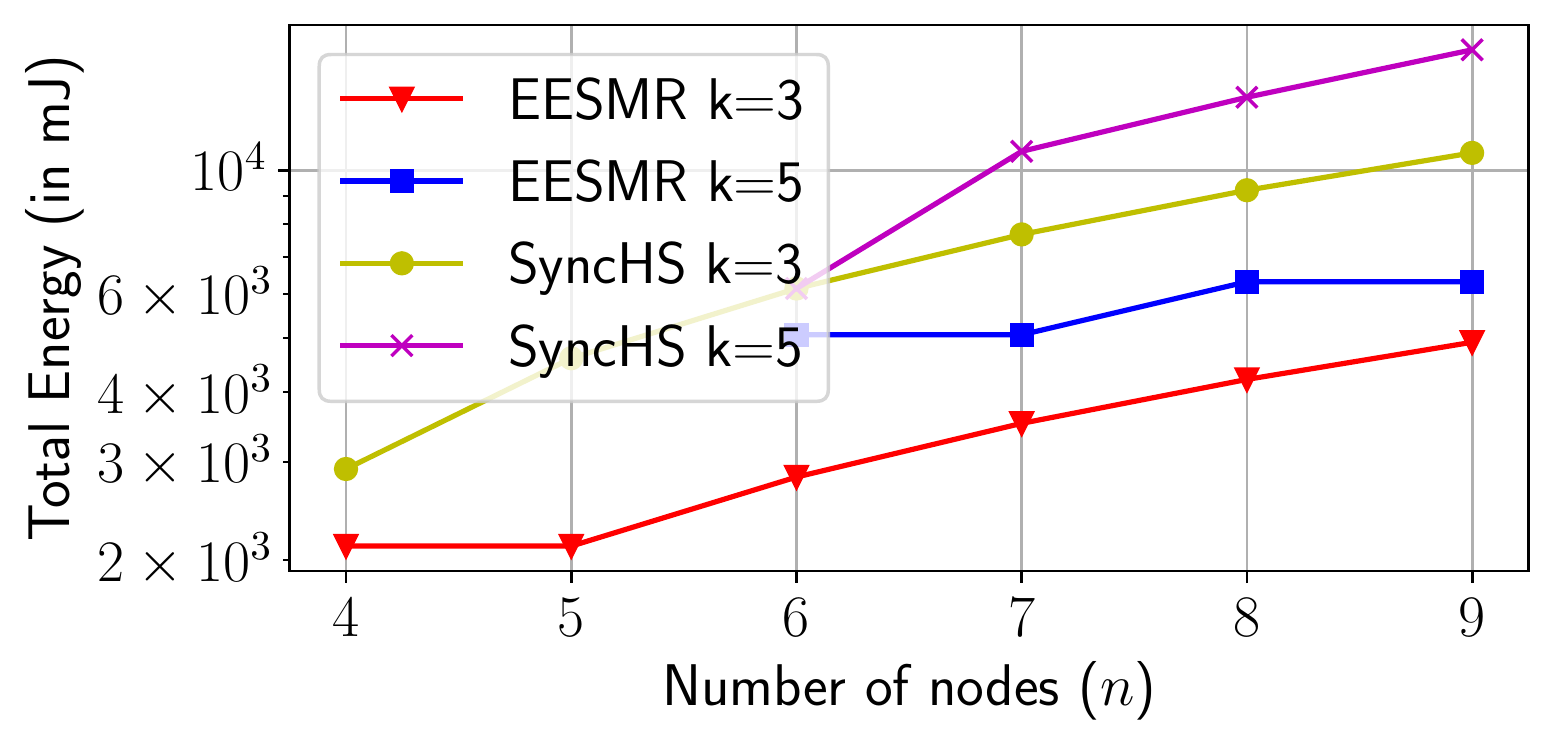}
    \caption{\ifsubmission\small\fi{\textbf{Total energy consumed by the correct nodes in \protocol and Sync HotStuff} with respect to the total number of nodes $n$.  }}
	\label{fig:E2CvsSyncHSTotalEnergy}
    \end{subfigure}
\end{multicols}        \vspace{-5mm}
\caption{\ifsubmission\small\fi\textbf{Energy characterization of various scenarios realized on embedded devices} }
\end{figure*}

\paragraph{Multicasts vs. unicasts on BLE}
We explore the unicast alternative to multicasts. Unicasts offers better
reliability at the cost of more point-to-point transmissions. The unicast links
use the GATT (\textbf{G}eneric \textbf{Att}ribute Profile) protocol, which is a connection-based protocol in the BLE realm. GATT offers reliable data transmission by inherently handling packet drops and re-transmitting data when necessary.   \cref{fig: multicast vs. unicast}
compares the energy consumption of $99.99\%$ \kcast on BLE with the equivalent
$\Dout = \K$ unicast links for BLE to transmit messages with different payloads.
We observe that expectedly, the energy required to transmit a \kcast using equivalent
unicasts increases linearly with the \K.

\cref{fig: multicast vs. unicast} compares the energy costs to transmit the same payloads using the $99.99\%$ reliable multicast and equivalent unicasts. We observe that a unicast link is more effective compared to a
multicast for bigger payloads, but this advantage is quickly negated as the
value of \K increases. The embedded devices used in our experiments
cannot handle concurrent unicast connections with its
neighbors, due to which data transfers using unicasts add extra time overheads.

\vspace{-1mm}
\subsection{Cryptographic Primitives}

\paragraph{Hash and HMAC costs}
We instantiate the MAC (message authentication code) algorithm using \texttt{SHA-256} as the underlying hash function, and measure the hashing cost for different message sizes.
We use short keys of recommended size $64$ bytes.
The cost of signing (\macsign) and verifying (\macvrfy) is the same as for the HMAC scheme.
The major cost in the HMAC scheme was mostly due to the underlying \texttt{SHA-256} algorithm.
We found that the cost of hashing increased linearly with message size.

\begin{table}[!htp]
\small
\centering
\caption{\ifsubmission\small\fi{\bfseries Comparing energy costs (in J) for signature generation
(\sigsign) and verification (\sigvrfy)} for ECDSA curves and RSA schemes.}
\begin{tabular}{l l cc}
\toprule
Algorithm & Parameters & Sign (in J) & Verify (in J)
\\ \midrule
\multirow{7}{*}{ECDSA} & \texttt{BP160R1} & $5.80$ & $11.03$ \\ %\cline{2-4}
& \texttt{BP256R1}   & $13.88$ & $27.34$ \\ %\cline{2-4}
& \texttt{SECP192R1} & $0.84$ & $1.50$ \\   %\cline{2-4}
& \texttt{SECP192K1} & $1.16$ & $2.24$ \\   %\cline{2-4}
& \texttt{SECP224R1} & $1.10$ & $2.14$ \\   %\cline{2-4}
& \texttt{SECP256R1} & $1.60$ & $3.04$ \\   %\cline{2-4}
& \texttt{SECP256K1} & $1.72$ & $3.35$ \\
\cmidrule(l){1-4}
\multirow{3}{*}{RSA} & $1024$-bit modulus & $0.40$ & $0.02$ \\ %\cline{2-4}
& $1260$-bit modulus & $0.79$ & $0.03$ \\ %\cline{2-4}
& $2048$-bit modulus & $2.41$ & $0.06$ \\ %\hline
\midrule
HMAC &
-- &
0.19 &
0.19
\\
\bottomrule
\end{tabular}
\label{tbl:pki_ops_energy}
\end{table}

\paragraph{Public key primitives}
We measured the energy costs for Elliptic Curve Digital Signature Algorithm (ECDSA)~\cite{johnsonEllipticCurveDigital2001} and RSA~\cite{cramerSignatureSchemesBased2000} for various security parameters, and recorded their energy readings in~\cref{tbl:pki_ops_energy}.
For ECDSA~\cite{johnsonEllipticCurveDigital2001}, we found that brainpool curves (\texttt{BP-XXX})~\cite{merkleEllipticCurveCryptography2010} were generally expensive to sign ($5$J), and verify ($11$J) for $160$ bit curves. NIST optimized curves (\texttt{SECP-XXX})~\cite{fips186DigitalSignature1994} offer better performance in comparison ($1$J and $2$J respectively) for $160$-bit curves. We use the implementations from MbedTLS~\cite{READMEMbedTLS2022} (commit: \texttt{b6229e304}) for our measurements. We implemented \text{BP160R1} using MbedTLS and RFC 5639~\cite{merkleEllipticCurveCryptography2010}. Finally, RSA using $1024$-bit modulus costs $400$mJ and $20$mJ to sign and verify, being an ideal candidate for CPS.

RSA using $1024$-bit modulus provides $80$-bits of security and should be practical for most CPS settings~\cite{kleinjungFactorization768BitRSA, IBMDocumentation2021}.
RSA provides two benefits over ECDSA: (i) reduced energy costs, and (ii) the benefit of asymmetry in verification and signing costs.
The latter implies higher energy costs for Byzantine nodes (to equivocate) while the impact is significantly less on the correct nodes that verify.

\iffull
\paragraph{A remark on hash based signatures}
Hash based signature schemes are claimed to be energy efficient but the public key sizes are large or the signature sizes are large.
For instance, signature schemes like Winternitz-OTS~\cite{merkleCertifiedDigitalSignature1990, lafranceSecurityWOTSPRFSignature2019} and HORS~\cite{reyzinBetterBiBaShort2002} have large key sizes for one time use.
This results in large storage requirements for CPS nodes. Multi-time hash based signatures such as XMSS~\cite{buchmannXMSSPracticalForward2011} and HORST~\cite{bernsteinSPHINCSPracticalStateless2015} require large intermediate memory leading to large RAM requirements (larger than what can be found in our CPS nodes).
For example, consider XMSS with the compression parameter $w = 2^8$ and re-usable for $2^{10}$ signatures.
For messages of size $32$ bytes (output of \texttt{SHA-256} for example), the signature is $102$ bytes long with large intermediate memory requirements.
It requires $765$ hash computations to sign a message, and $785$ hash computations to verify.
Due to large memory requirements for hash based signature schemes, we do not consider them in this work.
\fi

\subsection{Evaluating \protocol}\label{sec:experiments}

We implemented \protocol on embedded devices using BLE advertisements as \kcasts with
sufficient redundancy to achieve a reliability of $99.99\%$ for message transmission. Considering the memory limitations of the devices we used, we performed experiments on the blocking variant of \protocol.\footnote{The non-blocking variant of our protocol requires more memory (theoretically unbounded) to process all the proposed blocks. The only changes required in the protocol are to let the timers run concurrently, and multiple proposals can be forwarded by a node. The energy analysis still hold with respect to every block. A common issue with non-blocking protocols (all works so far) is that an adversary can propose a large number of non-blocking proposals but revert them during a view-change leading to wasted energy. We leave this as an orthogonal problem of independent research, as it is generally applicable to all non-blocking protocols. This can be mitigated by bounding the number of allowed non-blocking proposals.}
We use $\Din=\K$, and $\Dout=1$ in our
experiments, where each hyper-edge has degree \K. The network topology is defined as follows: In a system of \parseVn{n}
nodes using \kcasts, every node $\nodei{i}$ transmits messages to nodes $\nodei{i+1 \bmod n}\ldots,\nodei{i+k \bmod n}$. Every node $\nodei{i}$ receives messages from $\nodei{i-1 \bmod n}, \ldots \nodei{i-k \bmod n}$, because of which only these nodes need to be in communication range. The physical placement of the nodes is such that each node is within communication range of all nodes with which it has an edge. We measured the energy consumption of \protocol by varying $n$, \K and the block size. We take the \delay value to be $10n$ seconds. We took this metric because of the inability of our devices to scan and transmit data simultaneously. We designed a time-sequenced schedule and allocated an interval of $10$ seconds for every node to transmit data reliably. In this interval, all other nodes will only listen to packets from the designated node scheduled for transmission. Even if Byzantine nodes try and transmit their messages in undesignated intervals, the honest nodes will ignore their packets and only listen to the designated node. We show that the network is synchronous with this value of \delay. However, in rare situations where the network's \delay is higher, honest nodes can trigger the view change protocol to replace the leader, even when the leader was honest. We
instantiate the abstract block structure, concretely as $B = \msg{m,H\p{\defaultblkcontent},H\p{\defaulthashptr},\msg{i,H(b_i)}_L}$,
where $H$ is a cryptographic hash function, $m$ is the block height, and $L$ is the
leader.
In CPS, these blocks contain data \defaultblkcontent to maintain state.

We also implemented optimizations for the equivocation scenario, by avoiding construction of quorum certificates and using the equivocation and proof to quit the view.
We also optimize the no-progress scenario for the blocking version of our protocol, by avoiding the highest certificate construction, and ensuring that the status message only contains node's locked block.
This is secure because the highest committed block is either the locked block or its parent, and thus it is safe to extend any one of the locked blocks among $f+1$ locked blocks.

We recorded the energy consumed by the protocol by subtracting the
energy consumed by the device in sleep state. While doing so, we noticed that each node consumes 0.3 mW of power while in the sleep state and consumes approximately 1 mW of power while conducting SMR. Even at a modest frequency of one SMR per hour, the energy consumed by the SMR causes a substantial load on the node's limited energy resources. Hence, energy-optimal SMR protocols are necessary to reduce the overhead of SMR on lightweight nodes.

Our first observation is that the energy cost of \protocol is
independent of $n$ in the best case as we do not use certificates (which are
$f+1$ signatures). The energy cost only depends on \K. We show this in \cref{fig:E2CvsSyncHSTotalEnergy}, where the total energy consumption of all the correct nodes increases linearly with respect to the number of correct nodes in the network.

Next, we measured the difference in energy costs between the leader and the other
correct nodes for different values of \K. In \cref{fig:k-e2c}, we observed this difference. The linearity of energy consumption with \K is due to \K
incoming edges. The two energy costs are close, with the leader having a higher cost. We also evaluated the variation of energy consumption with block size and show that \protocol scales well with increasing message payloads (\cref{fig:e2c-bytewise}).

We measured the energy consumed by \protocol to perform a view change operation, in case of a Byzantine leader. The energy consumed per view change is plotted % (Fig. 15)
(\cref{fig:vc-e2c})
with respect to the value of $f$, in case of an equivocating leader and a stalling leader. The value of \K has been taken to be $f+1$ to measure the minimum energy that a leader spends in the view change. Comparison with correct case SMR has also been performed with \K=$f+1$. The stalling case is much more expensive in terms of energy because of the vote casting and the $f+1$ blame messages that need to be verified. It should be noted that the energy measurement for the view change is only for the blocking version of \protocol{}.
\begin{figure}[!ht]
    \centering
    \includegraphics[width=\linewidth]{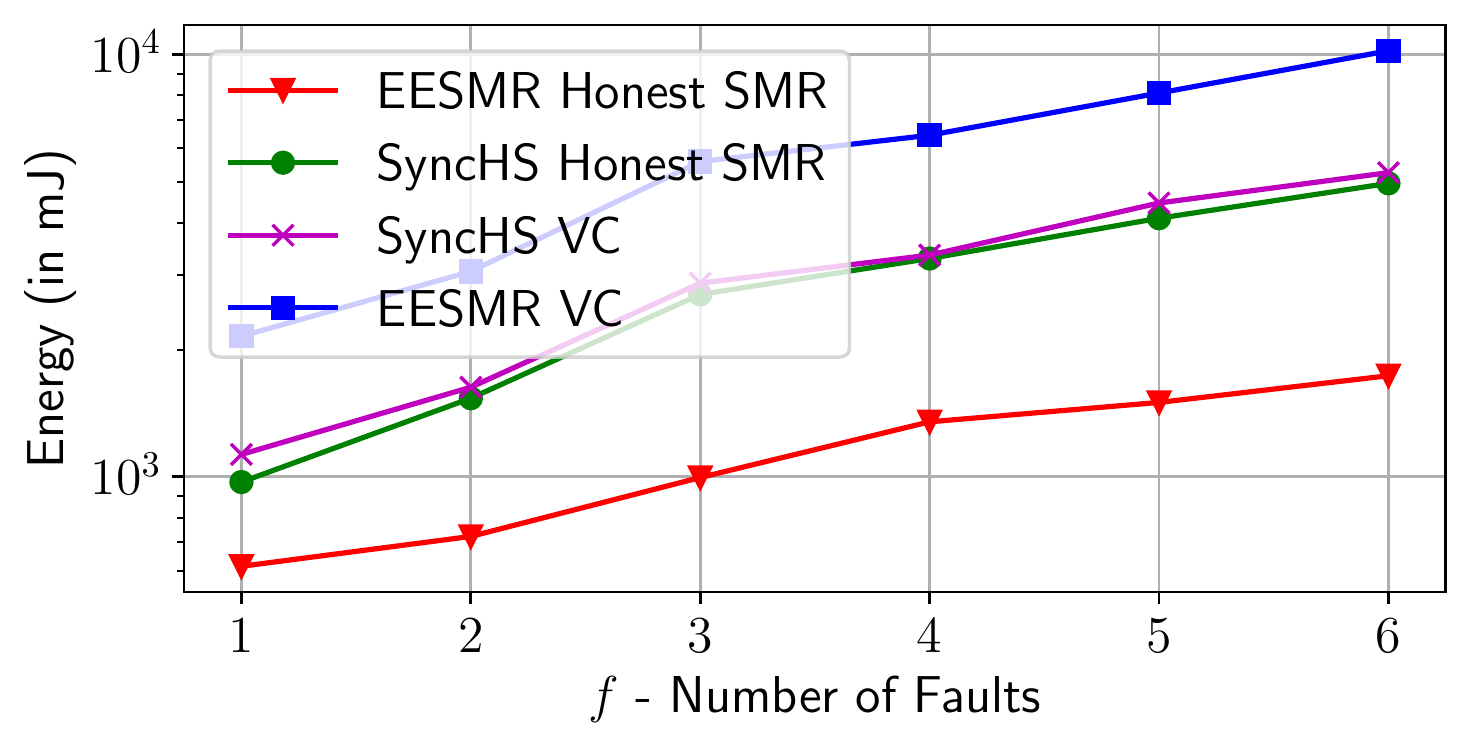}
    \vspace{-3mm}
    \caption{\ifsubmission\small\fi{\textbf{The energy consumed by a leader in \protocol vs. Sync HotStuff} to tolerate $f$ Byzantine faults in a system of $n=13$ nodes. We show both the faulty and non-faulty leader cases.}}\label{fig:e2cvssynchs}
    \ifsubmission{}\Description{}\fi
\end{figure}

\begin{table*}[!htbp]
    \renewcommand{\thefootnote}{\fnsymbol{footnote}}
	\small
	\caption{\ifsubmission\small\fi{{\bfseries Comparison of best-case and worst-case scenarios in
		  related SMR works.}
		Here, the number of nodes is $n$;
		We consider a partially connected network with all nodes connected to $d$ neighbors such that the graph remains connected even if $f$ nodes are Byzantine.
		Some of the protocols are for fully connected graphs but we compare them in our partially connected setting.
		\emph{Block period} refers to the time difference between two successive proposed blocks.
		$\delta \le \delay$ represents the actual message delivery time.}}\label{tab:comp other work SMR}
	\begin{tabular}{l cccc cccc}
		\toprule
		\multirow{3}{*}{Protocol} & 
		\multicolumn{4}{c}{Correct Leader (best-case)} &
		\multicolumn{4}{c}{Faulty Leader (worst-case)} 
		\\ \cmidrule(l){2-5} \cmidrule(l){6-9}
		& 
		\multirow{2}{*}{
			$
			\begin{matrix}
			\text{Communication} \\
			\text{Complexity}
			\end{matrix}
			$
		} &
		\multicolumn{2}{c}{Public Key Ops.} &
		\multirow{2}{*}{
			$
			\begin{matrix}
				\text{Block} \\
				\text{Period}
			\end{matrix}
			$
		} &
		\multirow{2}{*}{
			$
			\begin{matrix}
			\text{Communication} \\
			\text{Complexity}
			\end{matrix}
			$
		} &
		\multicolumn{2}{c}{Public Key Ops.} &
		\multirow{2}{*}{
			$
			\begin{matrix}
				\text{Block} \\
				\text{Period}
			\end{matrix}
			$
		}
		\\ \cmidrule(l){3-4} \cmidrule(l){7-8}
		& 
		&
		Sign &
		Verify &
		&
		&
		Sign &
		Verify 
		\\ \midrule
%  		Dolev \& Strong~\cite{dolevAuthenticatedAlgorithmsByzantine1983} & 
%  		$O(n^2d)$ & 
%  		$O(n)$ &
%  		$O(n^2)$ &
%  		$O(n^2d)$ & 
%  		$O(n)$ &
%  		$O(n^2)$ 
%  		\\
		Abraham \etal \cite{abrahamSynchronousByzantineAgreement2019} & 
		$O(n^2d)$ & 
		$O(n)$ & 
		$O(n^2)$ &
		--- &
		$O(n^3d)$\footnotemark[1] &
		$O(n)$ &
		$O(n^2)$ &
		---
		\\
		Sync HotStuff \cite{abrahamSyncHotStuffSimple2020} & 
		$O(n^2d)$ & 
		$O(n)$ & 
        $O(n^2)$ &
		$2\delta$ & 
        $O(n^3d)$\footnotemark[1] &
        $O(n)$ &
		$O(n^2)$ &
		$14\delay$
		\\
		OptSync \cite{shresthaOptimalityOptimisticResponsiveness2020} & 
		$O(n^2d)$ & 
		$O(n)$ & 
        $O(n^2)$ &
		$2\delta$ &
        $O(n^3d)$\footnotemark[1] &
        $O(n)$ &
		$O(n^2)$ &
		$14\delay$
		\\ 
		Rotating BFT SMR~\cite{abrahamOptimalGoodCaseLatency2022} &
		$O(n^2d)$ &
		$O(n)$ &
		$O(n^2)$ &
		$2\delta$ &
		$O(n^2d)$\footnotemark[2] &
		$O(n)$ & 
		$O(n^2)$ &
		$14\delay$
		\\
		\midrule
% 		\ifsubmission\else
        \textbf{\protocol} & 
 		$\mathbf{O(nd)}$ & 
 		$\mathbf{O(1)}$ & 
 		$\mathbf{O(n)}$ &
		$\mathbf{0}$ &
 		$\mathbf{O(n^3d)}^{*}$ &
        $\mathbf{O(n)}$ &
		$\mathbf{O(n^2)}$ &
		$\mathbf{21\delay}$
		\\ 
		\bottomrule
	\end{tabular}
	
{\footnotemark[1]A Byzantine leader can make the communication complexity in a view change to be $O(n^3d)$, but this results in all correct nodes committing $O(n)$ blocks after the view change. 
\footnotemark[2]For the rotating leader protocol, we report the worst-case for one iteration with a bad leader. 
In theory, we could have $o(n)$ bad leaders consecutively, leading to an increase in all the numbers by a factor of $n$. 
}
\end{table*}

%%% Local Variables:
%%% mode: latex
%%% TeX-master: "main"
%%% End:

\vspace{-4mm}
\subsection{Comparison with Sync HotStuff}
We compare in~\cref{fig:e2cvssynchs} \protocol's energy readings with our implementation of Sync HotStuff~\cite{abrahamSynchronousByzantineAgreement2019}. We made simplifying assumptions in favor of Sync HotStuff, by partially implementing vote forwarding. The minimum energy to tolerate $f$ faults has been measured for \protocol and Sync HotStuff.
We designed our $n=13$ graph to ensure that it is $f$ fault-tolerant. We measure the energy consumed by a correct leader in a network consisting of $\lfloor{\frac{n}{2}}\rfloor$ faulty nodes.
The amount of energy consumed by a node linearly increases with the value of \K. As \K increases, every node receives messages from an increasing number of nodes, which increases the energy spent in receiving them. Additionally, Sync HotStuff uses a certificate of size $f+1$ increasing the energy costs. When the leader is correct, Sync HotStuff is $2.85\times$ more energy hungry than \protocol. In a view change, the ratio of the energy consumed by \protocol Energy with respect to Sync HotStuff is $2.05$. To repeat our earlier argument, in the steady state, the number of SMR rounds where there will be a view change will become far smaller than where the leader will be benign. We also compare the total energy consumed by correct nodes per SMR for both protocols in  \cref{fig:E2CvsSyncHSTotalEnergy}. This demonstrates the scalability of \protocol in the best-case.

%%% Local Variables:
%%% mode: latex
%%% TeX-master: "main"
%%% End:

% Related Works
\section{Related Work}\label{sec:multi_related}

To the best of our knowledge, there is no prior work that formally analyzes SMR protocols for energy-efficiency.

\paragraph{SMR}
Prior works~\cite{yinHotStuffBFTConsensus2019,buchmanLatestGossipBFT2019,abrahamSyncHotStuffSimple2020, castroPracticalByzantineFault2002a,golanguetaSBFTScalableDecentralized2019,bessaniStateMachineReplication2014,kotlaZyzzyvaSpeculativeByzantine2007a,abrahamOptimalGoodCaseLatency2022} focused on improving the message complexity or the commit latencies of SMR.\@ 
But nodes in CPS tend to be constrained in their energy and computational resources, and decreasing communication complexity and increasing the computation does not help the CPS nodes. Protocol designs as of today use votes and cryptographic certificates while completely ignoring their energy costs on CPS devices. In the absence of signature aggregation techniques, these protocols~\cite{golanguetaSBFTScalableDecentralized2019,yinHotStuffBFTConsensus2019,abrahamSyncHotStuffSimple2020} increase the energy requirements for SMR.\@ 
We present a comparison of the related works in \cref{tab:comp other work SMR}.

Let $\delta$ be the actual network speed between any two nodes in a fully connected system.
OptSync~\cite{shresthaOptimalityOptimisticResponsiveness2020} commits and produces a block every $2\delta$. 
To achieve this all nodes verify $3n/4+1$ signatures. 
Sync HotStuff~\cite{abrahamSyncHotStuffSimple2020} on the other hand produces a block every $2\delta$, has a commit latency of $2\delay$, and verifies $n/2+1$ signatures. 
Clearly, Sync HotStuff is more energy efficient than OptSync with an increase in latency from $2\delta$ to $2\delay$. 
\protocol can create blocks as soon as it has enough messages without any other delays giving it a block period of $0$.
It has a commit latency of $4\delay$ (accounting for chain-synchronization), and uses only $1$ signature. 
Thus, \protocol{} is more efficient than both Sync HotStuff and OptSync to commit blocks.

\paragraph{Efficient SMR} 
MinBFT~\cite{veroneseEfficientByzantineFaultTolerance2013} and CheapBFT~\cite{kapitzaCheapBFTResourceefficientByzantine2012} are partially synchronous SMR protocols that use trusted counters and subsystems to improve the amount of communication of PBFT~\cite{castroPracticalByzantineFault2002a}.
They use trusted components to assign sequence numbers to requests, thereby restricting the equivocating capabilities of Byzantine adversaries. They also use only $f+1$ \emph{active} replicas which replicate the state thereby saving system resources. Apart from the stronger assumptions, they still use certificates per block and $3$ rounds of all-to-all diffusion, leading to the similar energy inefficiencies.

Other metrics of efficiency such as bit-complexity~\cite{duanByzantineReliableBroadcast2022,dasAsynchronousDataDissemination2021,alhaddadSuccinctErasureCoding2021,nayakImprovedExtensionProtocols2020} or latency~\cite{abrahamGoodcaseLatencyByzantine2021,abrahamByzantineAgreementOptimal2015,abrahamOptimalGoodCaseLatency2022} have been considered in the literature.
However, they use impractical techniques such as expander graphs or computationally expensive primitives such as threshold cryptography or online error correction.

There is an orthogonal set of works that improve the efficiency of protocols by using trusted components~\cite{yandamuriCommunicationEfficientBFTUsing2023,decouchantDAMYSUSStreamlinedBFT2022}.
Indeed, as evidenced in \cref{fig:e2c feasible region}, trusted components can significantly lower energy costs.
But existing works assume all-to-all communication which makes the protocol flood $O(n)$ messages in every round.
However, it remains an interesting open problem to combine our techniques with trusted components for improved energy-efficiency.

\iffull{}
\paragraph{BA and SMR}
Byzantine agreement (BA)~\cite{dolevAuthenticatedAlgorithmsByzantine1983} is an agreement problem
that is closely related to SMR, where all the correct nodes need to agree on a
message sent by the sender in the presence of malicious nodes. Informally, a BA
protocol satisfies \textit{safety} which requires all correct nodes to commit to
the same value and \textit{validity} which requires that if all correct nodes
start with value $v$ then all correct nodes must output $v$.

A key difference between BA and SMR that we consider in this work, is the
\textit{validity} condition. In BA, if all correct nodes start with a value $v$,
then the output of the protocol for all correct nodes must be $v$. However, an
SMR protocol abstracts the value $v$ as requests from clients, and leaves the
validity to the semantics or application layer and not the consensus layer.
\fi

\paragraph{Using multicasts} Multicasts occur frequently due to the spatial arrangement of CPS nodes and the omni-directional nature of wireless communication media. It may not cost more energy to use multicast links over unicast links. Several prior works~\cite{kooBroadcastRadioNetworks2004, montemanniMinimumPowerBroadcast2005, guoEnergyawareMulticastingWireless2007, fitziPartialConsistencyGlobal2000, considineByzantineAgreementGiven2005, khanExactByzantineConsensus2019} leverage multicasts to address BA. A common drawback of these works is that they are not tailored to leverage practical multicasts, e.g., considering the drop in reliability with increasing degree.

Koo \emph{et al.}~\cite{kooBroadcastRadioNetworks2004} provide a bound on the number of
Byzantine nodes that can be tolerated in the neighborhood of a correct sender
while using multicasts to achieve BA.\@ Montemanni \emph{et al.}
~\cite{montemanniMinimumPowerBroadcast2005} and Guo \emph{et al.} ~\cite{guoEnergyawareMulticastingWireless2007}
propose models in determining the optimal transmission power of a sender. However, these works~\cite{kooBroadcastRadioNetworks2004, montemanniMinimumPowerBroadcast2005, guoEnergyawareMulticastingWireless2007} cannot be directly used for CPS since the transmission power of the antennae is typically fixed.

Fitzi \etal~\cite{fitziPartialConsistencyGlobal2000} assume the existence of reliable multicasts
between \textit{every} three nodes. They show that the bound on the number of \faulty nodes that can be supported in such a fully connected synchronous network can be improved from $n/3$ to $n/2$ to solve BA. Considine \emph{et al.}~\cite{considineByzantineAgreementGiven2005} show that a resilience of $\tfrac{n}{f} > (k+1)/(k-1)$ can be achieved in a system with $n$ nodes and \kcasts, with $f$ of them being \faulty. Khan \emph{et al.}~\cite{khanExactByzantineConsensus2019} consider multicasts as undirected edges in a graph and provide necessary and sufficient conditions for BA in their model. The main drawback in these works is that there is an assumption that \emph{every} subset of $k$ nodes have a reliable \kcast present. In this work, we make weaker multicast assumptions of links existing only between neighboring nodes.

%%% Local Variables:
%%% mode: latex
%%% TeX-master: "main"
%%% End:

% Discussion
\section*{Conclusion}

We present \protocol, an SMR protocol with improved communication in partially connected networks, signature generation, and verification costs.
We present analysis techniques to determine choice of protocols and argue why optimizing the best-case is important.
We provide a general hypergraph model that can take advantage of multicasts in wireless CPS, when available.  
Finally, we empirically show a $33-64\%$ reduction in energy costs in the steady-state as compared to the state-of-the-art solution Sync HotStuff.

% Acknowledgement
%%% Local Variables:
%%% mode: latex
%%% TeX-master: "main"
%%% End:

\section{Acknowledgments}

We would like to thank the various reviewers and our shepherd Jérémie Decouchant for the helpful feedback and suggestions to improve this draft significantly.
We would like to thank Samarjit Chakraborty, Kartik Nayak and Nibesh Shrestha for helpful feedback about our draft.
This work was supported in part by NIFA award number 2021-67021-34252,the National Science Foundation (NSF) under grant CNS1846316 and National Science Foundation Cyber Physical Systems (CPS), the United States Department of Agriculture, and the Army Research Lab Contract number W911NF-2020-221.

\balance{}
\bibliographystyle{ACM-Reference-Format}
\bibliography{References-core,extra}

\iffull{}
\appendix
\section{Fault Tolerance in Hypergraphs}\label{sec: fault tolerance result}

\subsection{Preliminaries}
\paragraph{Hypergraph} To incorporate the multicasts available in the CPS settings, we model the network as a strongly connected hypergraph \h.
Formally, we define a hypergraph as follows:

\begin{definition}[Hypergraphs]\label{def:hypergraph}
\parsehgraph is a hypergraph, where \parseV is the set of nodes, and the hyper-edge set $\E \subseteq \V \times 2^{\V}$, where $2^{\V}$  is the power set of \V excluding the set $\V^{0}$, i.e., the empty set.
\end{definition}

In a hypergraph \parsehgraph, the edge set \E models the multicasts between the nodes.
Consider an example where $\nodei{1}$ has a multicast with $\nodei{2}$, $\nodei{3}$, and $\nodei{4}$.
This can be represented as $e = \hedge{1}{\nodei{2},\nodei{3},\nodei{4}} \in \E$.
In general, for an edge $e$, $S(e)$ denotes the sender, and $R(e)$ denotes the set of receivers of $e$.
Our definition also excludes self-loops, i.e., $\forall e \in \E$, $S(e) \not\in R(e)$.

\gdef\Din{\ensuremath{D_{in}}\xspace}
\gdef\Dout{\ensuremath{D_{out}}\xspace}

\emph{\kcasts, \Din, and \Dout.}
We say our hypergraph \h has \kcasts if every edge contains at least $\K$ receivers.
\Din denotes the minimum number of edges in which a node is a receiver, among all the nodes.
Similarly, \Dout denotes the minimum number of edges in which a node is a sender, among all the nodes.

\paragraph{Independence of edges}
In the unicast setting, every edge enables communication with new nodes.
This is not always true for hypergraphs.
Consider three edges $e_{1}$, $e_{2}$, and $e_{3}$ with the same sender $S(e_{\cdot})=\nodei{i}$ having $R(e_1) = \set{\nodei{1},\nodei{2}}$, $R(e_2) = \set{\nodei{2},\nodei{3}}$ and $R(e_3) = \set{\nodei{1},\nodei{3}}$.
All three edges are unique but one of them is redundant and can be removed while still reaching the nodes $\nodei{1}$, $\nodei{2}$, and $\nodei{3}$.
We call hypergraphs without such redundant edges as having an \textit{independent set of edges} (\cref{def:ind edges}).
In our discussion, we assume that all hypergraphs have independent set of edges.
A modified spanning tree algorithm can be used to find such sets.

\begin{definition}[Independence of edges]\label{def:ind edges} 
	Edges \E in a hypergraph \parsehgraph are said to \emph{independent} if for every node $\nodei{i}$ there does not exist two distinct subsets $\E_1,\E_2 \subseteq \E$ with $S(e_{i})=\nodei{i}$ for every $e_{i}\in\E_{1}\cup\E_{2}$  such that $\bigcup\limits_{e_i \in \E_1} R(e_i) = \bigcup\limits_{e_j \in \E_2} R(e_j)$.
\end{definition}

\paragraph{In- and out-degrees}
In a hypergraph, a node $\nodei{i}$ has two degrees: the in-degree $\din(\nodei{i})$ and the out-degree $\dout(\nodei{i})$.
We use the global minimum as the hypergraph's in-degree and out-degree.
Formally, it is defined as follows:
\begin{definition}[In-degree \din]
  For any node $\node_i \in \V$ of a hypergraph \parsehgraph, the in-degree of node $\nodei{i}$, denoted by $\din\p{\nodei{i}}$, is the number of unique nodes $\nodei{j} \in \V$, such that there exists hyper-edges $e\in\E$, in which $S(e)=\nodei{j}$ and $\nodei{i}\in R(e)$.
  We denote by \din, the minimum $\din\p{\nodei{i}}$ among all nodes $\nodei{i} \in \V$.
\end{definition}
\begin{definition}[Out-degree \dout]
  For any node $\nodei{i} \in \V$ of a hypergraph \parsehgraph, the out-degree of node $\nodei{i}$, denoted by $\dout\p{\nodei{i}}$, is the number of unique nodes $\nodei{j}\in\V$, such that there exists hyper-edges $e\in \E$, in which $S(e)=\nodei{i}$ and $\nodei{j}\in R(e)$.
  We denote by \dout, the minimum $\dout\p{\nodei{i}}$ among all nodes $\nodei{i} \in \V$.
\end{definition}

\paragraph{Network delay} 
In this section, we overload \delay as the \textit{maximum network delay} over the partially connected hypergraph \h.
The \delay parameter is an upper bound on the time a message from any correct sender takes to reach any other correct node in the network, possibly after the flooding.
\delay captures delays on each edge, the computations performed, re-transmissions due to collisions/link failures, verifying the integrity of messages and other scheduling factors involved in the network.

For the clarity of presentation, we assume that all nodes have clocks with zero drift and their clocks are within $\delay$ of each other which can be achieved using clock synchronization protocol~\cite{abrahamSynchronousByzantineAgreement2019}. 
A bounded clock drift can be handled by running clock synchronization protocols regularly~\cite{abrahamSynchronousByzantineAgreement2019}, to ensure that the drift does not grow unbounded.

\subsection{Hypergraph Fault-tolerance}

For any network graph, we need to ensure that Byzantine nodes cannot partition the correct nodes.
Even if one correct node \nodei{i} becomes unreachable because all its neighbors do not forward its messages, there is one less correct node in the protocol than $n-f$, and the protocol is not secure.

For an undirected partially connected network, if $d$ is the \textit{minimum} degree among all the nodes in the graph, the necessary condition to prevent partitioning is $f < d$~\cite{dolevAuthenticatedAlgorithmsByzantine1983}.
For directed partially connected network graphs, if $d_{i}$ and $d_{o}$ are the minimum in-degree and out-degree among all the nodes in the graph, then the necessary condition is $f < \min\p{d_{i},d_{o}}$.
Intuitively, this generalizes that all correct nodes must be able to send messages to all other correct nodes as well as be able to receive messages from all other correct nodes.

In our hypergraph model \parsehgraph with \kcasts, $\dout\p{\nodei{i}}$ is the number of distinct nodes to which the node $\nodei{i}$ can send messages, and $\din\p{\nodei{i}}$ be the number of distinct nodes from which $\nodei{i}$ can receive messages.
For fault-tolerance, this implies $f < \min\limits_{\nodei{i}\in \nodes}\dout\p{\nodei{i}}$ and $f < \min\limits_{\nodei{i}\in \nodes}\din\p{\nodei{i}}$ as the necessary condition (formally shown in \cref{thm:k-cast_rule}) for every node $\nodei{i}$ and any graph.

\begin{lemma}\label{thm:k-cast_rule}
  Consider a hypergraph \parsehgraph with \parseV nodes.
  To tolerate $f$ faults, a necessary condition is $f < \connectivityresult$.
\end{lemma}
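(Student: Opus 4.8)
The plan is to prove the contrapositive: assuming \result, i.e.\ $f \ge \connectivityresult$, I would exhibit an adversarial strategy that prevents any protocol from implementing SMR (\cref{def:SMR def}) with up to $f$ Byzantine faults, by partitioning a single correct node away from all the others. First I would fix a node \nodei{i} attaining the minimum, so that $\min\p{\din\p{\nodei{i}}, \dout\p{\nodei{i}}} = \connectivityresult \le f$, and split into two symmetric cases according to whether this minimum is realized by the out-degree or the in-degree of \nodei{i}.

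For the out-degree case the key observation is that the set $N_{out} := \bigcup_{e:\, S(e)=\nodei{i}} R(e)$ of one-hop receivers of \nodei{i} is a cut: by the definition of $\dout\p{\nodei{i}}$ this set has at most $f$ distinct members, and every message that \nodei{i} sends must first arrive at one of them, since these are the only nodes \nodei{i} can address directly. I would let the adversary \adv corrupt exactly $N_{out}$ (affordable, as $|N_{out}| = \dout\p{\nodei{i}} \le f$) and instruct those Byzantine nodes to silently discard every message originating from \nodei{i} while otherwise behaving honestly. Then no message from \nodei{i} ever reaches a correct node, even after flooding, because every relay path out of \nodei{i} dies at its first (faulty) hop. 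In this single execution \nodei{i} is correct but send-isolated, so a client request entering the system only through \nodei{i} can never be observed by any other correct node and is therefore never committed by them, directly violating the Liveness clause that each request is eventually committed by all correct nodes.

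The in-degree case is symmetric. Here $\din\p{\nodei{i}} \le f$, so \adv corrupts the at-most-$f$ distinct in-neighbors of \nodei{i} and has them send nothing (or only malformed traffic) to \nodei{i}; again \nodei{i} stays correct. Now \nodei{i} never receives a well-formed message from any correct node, so when the remaining correct nodes commit a request submitted to some other correct node, \nodei{i} never learns of it and cannot commit it. This again violates Liveness, since not \emph{all} correct nodes commit the request. Combining the two cases, $f \ge \connectivityresult$ precludes a correct $f$-resilient SMR protocol, which is exactly the claimed necessary condition.

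I expect the counting/budget bookkeeping to be the part needing the most care rather than any deep argument: I must check that the entire fault budget is spent on the cut (so that $\dout\p{\nodei{i}} \le f$, resp.\ $\din\p{\nodei{i}} \le f$, suffices) while \nodei{i} itself remains correct, and that silencing the first hop genuinely kills \emph{every} direct and relayed path from \nodei{i} to a correct node, which relies on the in-/out-degree counting distinct endpoints. Once the partition is in place, the reduction to an SMR violation is standard; as an alternative to the direct Liveness argument one could instead run an indistinguishability argument (the isolated correct \nodei{i} is indistinguishable, to every other correct node, from a silent Byzantine \nodei{i}) to force a safety or agreement failure, but the Liveness route above avoids constructing a second execution and is cleaner.
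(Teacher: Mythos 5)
Your proof is correct and takes essentially the same approach as the paper's: identify the node \nodei{i} attaining $\connectivityresult$, corrupt its at-most-$f$ one-hop out-neighbors (resp.\ in-neighbors), and observe that this cuts \nodei{i} off from every other correct node, so $f$ faults cannot be tolerated. The paper's version is terser and leaves the resulting SMR violation implicit, whereas you make the cut-set bookkeeping and the specific liveness failure explicit; this is elaboration, not a different argument.
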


\begin{proof} Let the number of Byzantine nodes be $f$, which is equal to $\min\limits_{\nodei{i}\in\nodes} \p{ \sendNd\p{\nodei{i}}, \recvNd\p{\nodei{i}} }$.
  Let \nodei{i} be the node whose \sendNd is the smallest, i.e., $\sendNd\p{\nodei{i}} = f$.
  If we corrupt the $f$ receivers of \nodei{i}, then the faulty nodes can prevent all messages sent by $\nodei{i}$ from reaching the other correct nodes.
  Similarly, if all $\recvNd\p{\nodei{i}}$ are corrupted, then Byzantine nodes can prevent other correct nodes from communicating with $\nodei{i}$.
  Therefore, in order for the hypergraph \h with \kcasts to tolerate $f$ faults, $f < \connectivityresult$ is necessary.
\end{proof}

\begin{lemma}[Connectivity necessary condition]\label{lem:nd lem}
	For any hypergraph \parsehgraph with only \kcasts and each node having at least $\Din$ incoming \kcasts and $\Dout$ outgoing \kcasts, tolerating $f$ Byzantine faults,
  $f < k\cdot \min\p{\Din,\Dout}$.
\begin{proof}
	From \cref{thm:k-cast_rule}, $f<\connectivityresult$ is necessary.
  We need to show that $\min\limits_{\nodei{i}\in\nodes}\p{\dout(\nodei{i})} \le k\cdot\Dout$ and $\min\limits_{\nodei{i}\in\nodes}\p{\din(\nodei{i})} \le k\cdot \Din$.
	Let $\Dout(\nodei{i})$ be the number of outgoing \kcasts for node $\nodei{i}$.
  By the definition of \kcasts, the independence of edges assumption (\cref{def:ind edges}), and since \h has only \kcasts, $\dout\p{\nodei{i}}\le k\cdot \Dout\p{\nodei{i}}$ must hold from the union bound.
  The equality occurs in a \h where every \kcast introduces new nodes to \nodei{i}.
  Since every node has at least \Dout outgoing \kcasts, $\min\limits_{\nodei{i}\in\nodes}\p{\Dout(\nodei{i})} = \Dout$ giving us $f<k\cdot \Dout$.
  Using similar analysis for the incoming \kcasts gives us $f<k\cdot \Din$.
  Combining the two inequalities gives $f < k\cdot \min\p{\Din,\Dout}$.
\end{proof}
\end{lemma}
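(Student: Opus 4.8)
The plan is to reduce the claim to the per-node necessary condition already established in \cref{thm:k-cast_rule} and then to bound the distinct-neighbour counts \din and \dout by the numbers of incident \kcasts. Concretely, \cref{thm:k-cast_rule} gives that tolerating $f$ faults requires $f < \connectivityresult$, so it suffices to prove the two inequalities $\min\limits_{\nodei{i}\in\nodes}\dout\p{\nodei{i}} \le \K\cdot\Dout$ and $\min\limits_{\nodei{i}\in\nodes}\din\p{\nodei{i}} \le \K\cdot\Din$. Chaining each of these with the strict inequality from \cref{thm:k-cast_rule} and taking the minimum over the two directions then yields $f < \K\cdot\min\p{\Din,\Dout}$.

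First I would fix an arbitrary node \nodei{i} and count the distinct nodes it can reach. The node \nodei{i} is the sender of exactly $\Dout\p{\nodei{i}}$ outgoing \kcasts, and in this ``only \kcasts'' hypergraph each such hyper-edge has a receiver set of cardinality \K. The out-neighbourhood of \nodei{i} is the union of these $\Dout\p{\nodei{i}}$ receiver sets, so by the union bound its size satisfies $\dout\p{\nodei{i}} \le \K\cdot\Dout\p{\nodei{i}}$; here the independence-of-edges assumption (\cref{def:ind edges}) guarantees that none of the $\Dout\p{\nodei{i}}$ outgoing \kcasts is redundant, and the bound is tight precisely when every \kcast contributes a fresh block of \K receivers.

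Next I would pass to the extremal node. Since \Dout is defined as $\min\limits_{\nodei{i}\in\nodes}\Dout\p{\nodei{i}}$, evaluating the previous inequality at the node attaining this minimum gives $\min\limits_{\nodei{i}\in\nodes}\dout\p{\nodei{i}} \le \K\cdot\Dout$. Repeating the identical argument with receiver sets replaced by the senders whose \kcasts reach \nodei{i} (the incoming \kcasts) yields $\min\limits_{\nodei{i}\in\nodes}\din\p{\nodei{i}} \le \K\cdot\Din$. Combining both with $f < \connectivityresult$ from \cref{thm:k-cast_rule} gives $f < \K\cdot\Dout$ and $f < \K\cdot\Din$ simultaneously, hence $f < \K\cdot\min\p{\Din,\Dout}$.

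The step I expect to require the most care is the union-bound inequality $\dout\p{\nodei{i}} \le \K\cdot\Dout\p{\nodei{i}}$: I must ensure that distinct reachable nodes are not double-counted across overlapping \kcasts and that the independence assumption is invoked correctly, so that each of the $\Dout\p{\nodei{i}}$ edges is a genuine, non-redundant contributor to the out-neighbourhood (otherwise the count $\Dout\p{\nodei{i}}$ itself could be inflated and the bound would be vacuous). By contrast, selecting the minimising node and taking the minimum over the two directions are routine.
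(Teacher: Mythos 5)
Your proof is correct and follows essentially the same route as the paper's own argument: invoke \cref{thm:k-cast_rule} for the per-node necessary condition, bound $\dout\p{\nodei{i}} \le \K\cdot\Dout\p{\nodei{i}}$ (and the incoming analogue) via the union bound over receiver sets with the independence-of-edges assumption, pass to the node attaining the minimum, and combine the two directions. The only cosmetic difference is that you spell out the extremal-node step and the tightness caveat slightly more explicitly than the paper does; the substance is identical.
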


The result in \cref{lem:nd lem} becomes a tight condition, already shown in literature~\cite {dolevAuthenticatedAlgorithmsByzantine1983}, if the hypergraph is treated as a regular graph using unicasts described above by setting $k = 1$, $\dout \gets d_{o}$ and $\din \gets d_{i}$.

Note that this is a necessary condition and not a sufficient condition for hypergraph connectivity.
In the rest of the paper, for ease of exposition, we assume that the graph is designed optimally so that removing any $f$ nodes and its edges from the system, the graph
still remains strongly connected, i.e.,\ is partition resistant.

\subsection{Utilizing multicasts in \protocol{}}

In order to take advantage of multicasts and the model in \protocol{}, we emulate logical full-connectivity using flooding, and appropriately choose $\delay$ parameters to ensure delivery of messages within the \delay{} bound.
The only requirement is that the hypergraph ensures $f$-connectivity, i.e., no matter which $f$ nodes are Byzantine, the remaining correct nodes can communicate with each other.

%%% Local Variables:
%%% mode: latex
%%% TeX-master: "main"
%%% End:

\section{Security Analysis}\label{sec:security}

\cref{lem: honest sender} shows that when the leader is correct, all the nodes commit the leader's blocks.

\begin{lemma}[Safety and liveness for a correct leader]\label{lem: honest sender} 
    In the steady-state, if the leader $\currentleadervar$ of view $v$ is correct and the leader proposes $\block$ for round $r$, then all correct nodes commit $\block$ for round $r$ and in view $v$.
\begin{proof}
	All correct nodes will receive $\block$ sent by $\currentleadervar$ from the bounded synchrony assumption.

	A correct node will send a \blameMsg{} in \cref{alg:protocol:blame-np} message only if any of the blame conditions are satisfied. 
	The steady state for $\currentroundvar$ begins from round $3$.
	Let a node enter any round $\currentroundvar$ at time $t$.
	The leader will enter the round at time at most $t+\delay$ whose proposed block \block{} will reach all nodes by time $t+2\delay$.
	Adding another $2\delay$ for chain synchronization, all correct nodes will obtain a valid proposal by time $t+4\delay$.
	Thus, no correct node will trigger the case in \cref{alg:protocol:blame-np} for a correct leader.

	The $f$ faulty nodes cannot generate $f+1$ view change messages to trigger view change in \cref{alg:protocol:quit-view1,alg:protocol:quit-view2}.
	Since the digital signature scheme used is secure, we can ensure a \faulty{} node cannot forge signed messages for $\currentleadervar$. 
	Therefore, all correct nodes will not blame $\currentleadervar$ through \cref{alg:protocol:blame-eq}. 

	Thus $\commitTimer(\block)$ times out in \cref{alg:protocol:commit-rule} resulting in committing of $\block$ for round $\currentroundvar$.
\end{proof}
\end{lemma}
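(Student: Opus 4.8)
The plan is to establish three things in sequence: every correct node obtains a valid proposal for $\block$ in time, no correct node is ever induced to abandon view $v$ before committing, and consequently the commit timer for $\block$ fires and triggers the commit rule. The first two steps show that the steady-state path is never interrupted when the leader is correct, and the third converts the uninterrupted $4\delay$ wait into a commit at every correct node.

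First I would pin down the timing. Using the bounded-synchrony assumption together with the $\delay$ clock offset between nodes (the non-lock-step model), if a correct node enters round $\currentroundvar$ at local time $t$, then the correct leader $\currentleadervar$ enters it by time at most $t+\delay$, its proposal reaches every correct node by $t+2\delay$, and allowing a further $2\delay$ for chain synchronization of any missing ancestor blocks, every correct node holds a valid proposal for round $\currentroundvar$ by time $t+4\delay$. Since the blame timer is set to $4\delay$ in \cref{alg:protocol:blame-reset} and reset upon each relayed proposal, no correct node times out and blames a correct leader for lack of progress in \cref{alg:protocol:blame-np}.

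Next I would rule out the remaining two view-change triggers. A correct leader never issues two conflicting proposals for the same round, so no correct node can witness a genuine equivocation in \cref{alg:protocol:blame-eq}; and by unforgeability of the signature scheme, the $f$ \faulty{} nodes cannot fabricate a signed pair of proposals attributed to $\currentleadervar$. Combining this with the timing argument, no correct node ever broadcasts a blame for view $v$, so the only blame messages in circulation originate from the at most $f$ faulty nodes, which is strictly fewer than the $f+1$ needed to assemble a blame certificate in \cref{alg:protocol:quit-view1,alg:protocol:quit-view2}. Hence no view change occurs in view $v$.

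Finally, because the view is never abandoned and no equivocating block for round $\currentroundvar$ is heard during the commit wait, the commit timer $\commitTimer(\block)$ set in \cref{alg:protocol:commit-set} expires and the commit rule in \cref{alg:protocol:commit-rule} fires at every correct node, committing $\block$ together with its ancestors for round $\currentroundvar$ in view $v$. I expect the main obstacle to be the timing accounting: one must add the $\delay$ clock offset and the $2\delay$ chain-synchronization slack correctly so that the worst-case arrival time $t+4\delay$ does not exceed the $4\delay$ blame timeout. Getting this bound tight, rather than conservatively padded, is exactly what lets the argument go through while keeping the commit latency low.
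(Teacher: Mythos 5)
Your proposal is correct and follows essentially the same route as the paper's own proof: the same $t \to t+\delay \to t+2\delay \to t+4\delay$ timing argument (including the $2\delay$ chain-synchronization slack) to rule out no-progress blames, unforgeability plus the correct leader's non-equivocation to rule out equivocation blames, the $f < f+1$ counting argument to rule out a blame certificate, and finally the expiry of $\commitTimer(\block)$ to conclude the commit. The only difference is presentational---you make the blame-timer reset and the ``no view change occurs'' step slightly more explicit than the paper does---but the decomposition and all key steps coincide.
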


\cref{lem: commit guarantee} shows that even if the leader is Byzantine, no two nodes can commit to different blocks in the same view $v$, guaranteeing safety within the view $v$.

\begin{lemma}[Commit safety in a view]\label{lem: commit guarantee}
	In \protocol{}, no two correct nodes can commit to different blocks $\block$ and $\block'$ at height $m$ in the same view $v$.
\end{lemma}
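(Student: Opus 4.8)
The plan is to argue by contradiction, leaning on the $4\delay$ equivocation-free commit window together with the bounded-synchrony delivery bound established in the steady-state analysis. Suppose, for contradiction, that two correct nodes $p$ and $q$ commit distinct blocks $\block \ne \block'$ at the same height $m$ in the same view $v$; without loss of generality let $p$ commit $\block$ at time $t_p$ and $q$ commit $\block'$ at time $t_q$ with $t_p \le t_q$. The goal is to show that $q$ must have learned of the conflict before $t_q$, which would have cancelled its commit timer.

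First I would reduce a general height-$m$ conflict to a single-round equivocation. Within view $v$ every correct node only relays and locks on proposals that extend its current \lockBlock{} (\cref{alg:protocol:line:voting-in-the-head}), and all correct nodes enter $v$ from a common committed block, so the two committed chains ending in $\block$ and $\block'$ share a prefix and first diverge at some height $m' \le m$ whose two blocks have a common parent. Since committing a block commits all of its ancestors (\cref{alg:protocol:commit-rule}), these two diverging blocks are themselves committed by $p$ and $q$, so it suffices to prove the claim for them. Being conflicting proposals built on a common parent in view $v$, they are proposals for the same round $r$ and hence form exactly the equivocation pattern $pr_1,pr_2$ with $pr_1.round = pr_2.round$ that is detected in \cref{alg:protocol:blame-eq}. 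I may therefore assume $\block$ and $\block'$ are conflicting round-$r$ proposals.

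Next I would unfold what committing means and then run the timing argument. By \cref{alg:protocol:commit-set,alg:protocol:commit-rule}, $q$ committing $\block'$ at $t_q$ means it relayed $\block'$ at $t_q-4\delay$, set $\commitTimer(\block')$ to $4\delay$, and the timer expired at $t_q$ without being cancelled; cancellation (\cref{alg:protocol:commit-cancel-eq}) occurs precisely when the node learns of a round-$r$ equivocation. Thus $q$ held at most one of $\block,\block'$ throughout. On the other hand, because $p$ commits $\block$ at $t_p$, it forwarded $\block$ at $t_p-4\delay$, so by bounded synchrony and chain synchronization every correct node—in particular $q$—holds $\block$ by time $t_p-\delay$. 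Since $t_p \le t_q$, we have $t_p-\delay \le t_q-\delay < t_q$, so $q$ obtains the conflicting block $\block$ strictly before it commits $\block'$. If $t_p-\delay \ge t_q-4\delay$, then $q$ holds both $\block$ and $\block'$ inside its commit window, detects the round-$r$ equivocation, and cancels $\commitTimer(\block')$; if instead $t_p-\delay < t_q-4\delay$, then $q$ already held a conflicting round-$r$ proposal at the moment it relayed $\block'$, so it would have blamed (\cref{alg:protocol:blame-eq}) rather than proceed to commit. In either case $q$ does not commit $\block'$, a contradiction.

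The main obstacle I anticipate is the bookkeeping in the reduction step: justifying rigorously that two distinct committed blocks at the same height in one view descend from a single same-round equivocation, so that the \cref{alg:protocol:blame-eq} detector with its $pr_1.round = pr_2.round$ guard actually fires, and confirming the precise semantics by which a node that \emph{itself} detects an equivocation (rather than merely receiving a forwarded blame) cancels its own commit timers. I would discharge the latter by invoking the stated steady-state invariant that a correct node never commits a block for a round and view in which it has heard an equivocation. Once the delivery bound ``forwarded at $t-4\delay \Rightarrow$ held by all correct nodes by $t-\delay$'' is in hand, the remaining timing is routine.
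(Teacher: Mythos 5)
Your proof is correct and rests on the same core mechanism as the paper's: contradiction, the $4\delay$ relay-then-wait commit window of \cref{alg:protocol:commit-set,alg:protocol:commit-rule}, and the rule that a node which hears a same-round equivocation before its timer expires cannot commit. The structural difference is in the timing step. The paper's proof is a two-hop ``bounce'' argument: node $i$'s relay of $\block$ at $t-4\delay$ reaches $j$ by $t-3\delay$, which forces $j$'s commit of $\block'$ to have occurred before $t+\delay$, hence $j$'s relay of $\block'$ occurred before $t-3\delay$ and reached $i$ before $t-2\delay$, contradicting $i$'s commit at time $t$. You use only one hop: since $t_p \le t_q$, the earlier committer's relay reaches the later committer strictly before $t_q$, so $q$ either cancels $\commitTimer(\block')$ or (your second case) already held the conflicting round-$r$ proposal when $\block'$ arrived and so never runs its commit window to completion. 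This is a genuine simplification—the paper's return hop is not actually needed—and it is sound under the same cancellation/no-commit-after-equivocation semantics that both proofs invoke (your explicit appeal to the stated steady-state invariant is the right way to discharge the fact that the literal timer-cancellation instruction sits in \cref{alg:protocol:byz2} rather than \cref{alg:protocol:blame-eq}). On the reduction of a height-$m$ conflict to a same-round equivocation, you are more careful than the paper, which simply asserts ``this corresponds to some round $r$''; but note that your key claim—two conflicting children of a common parent in view $v$ must carry the same round number—is itself only asserted, and it is not literally true of what a Byzantine leader can \emph{sign}: the leader can label two children of one parent with different rounds. Ruling out that both get committed requires the locking discipline of \cref{alg:protocol:line:voting-in-the-head} (a correct node that advances past round $r$ has locked some round-$r$ block, so it cannot later lock a conflicting block at the same height, and any alternative round-$r$ proposal reintroduces a same-round equivocation). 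You flag this honestly as your anticipated obstacle; it is a gap you share with, rather than one you introduce beyond, the paper's own proof.
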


\begin{proof}
	We prove this by contradiction. 
	Say two correct nodes $\nodei{i}$, $\nodei{j} \in \nodes$ commit to $\block$ and $\block'$ respectively at height $m$ in view $v$.
	This corresponds to some round $r$ in view $v$.
	Without loss of generality, let $\nodei{i}$ commit $\block$ first, at time $t$.
	By time $t-4\delay$, \nodei{i} must have forwarded $\block$ while executing \cref{alg:protocol:line:voting-in-the-head}.
	This will reach $\nodei{j}$ by time $t-3\delay$ and even if it does not recognize the chain, it knows that the leader signed some other block for round $r$.
	Therefore, \nodei{j} must have committed $\block'$ before time $t+\delay$ because otherwise $\block'$ would not have been committed.
	In this case, $\nodei{j}$ must have received the block $\block'$ before time $t-3\delay$, then it would have executed \cref{alg:protocol:line:voting-in-the-head} and as a consequence $\nodei{i}$ would not have committed $\block$.
	This results in a contradiction and concludes the proof.
\end{proof}

The view change protocol is triggered when $f+1$ no progress blame messages are heard by all the correct nodes (\cref{alg:protocol:quit-view1,alg:protocol:quit-view2}). 
This indicates that at least one correct node has detected Byzantine behavior of the leader, and it is safe to quit the view $v$. 
If the leader is correct, a view change cannot occur as shown in \cref{lem: honest sender}.
In \cref{lem: proof local view change}, we show that the view change protocol for the blocking commit rule ensures safety across views.

\begin{lemma}[Unique Extensibility]\label{lem: proof local view change}
    The view change protocol ensures if a correct node commits block $\block$ in view $v$, then all correct nodes commit $\block$ in future views $v' \ge v$.
\end{lemma}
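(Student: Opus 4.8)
The plan is to prove the lemma by induction on the view $v' \ge v$, so that it suffices to analyze a single view change (from view $v$ to view $v+1$) and then reapply the same reasoning. The engine of the argument is a \emph{locking invariant}: if a correct node commits $\block$ in view $v$, then at the moment of commit every correct node has $\lockBlock$ extending $\block$. I would establish this exactly as in the steady-state analysis underlying \cref{lem: commit guarantee}. A correct node commits $\block$ only after $\commitTimer(\block)$ has run for $4\delay$ without hearing an equivocation, and it forwarded $\block$ (executing \cref{alg:protocol:line:voting-in-the-head}) at least $4\delay$ earlier; hence $\block$ reached every correct node at least $3\delay$ before the commit and was chain-synchronized at least $\delay$ before it. Since no correct node could have locked on a conflicting block without its forward reaching the committer within $\delay$ and aborting the commit, each correct node set and kept its $\lockBlock$ on the $\block$-chain (by \Call{LockCompare}{}).

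Next I would push this invariant through the certificate-collection phase of \Call{QuitView}{}. Two counting facts drive everything: since $n>2f$ we have $n-f \ge f+1$, so (i) every valid quorum certificate carries a vote from at least one correct node, and (ii) any set of $f+1$ certificates contains at least one from a correct node. From (i), any block that ever acquires a certificate was voted for by some correct node whose $\lockBlock$ extends $\block$ (\cref{alg:line:vote for highest qc}), so \emph{no certified block conflicts with $\block$}. I would then show the committing node obtains a certificate for a descendant of $\block$: it broadcasts its $\commitBlock$, which extends $\block$ and hence does not conflict with any correct node's $\lockBlock$, so all $n-f \ge f+1$ correct nodes reply with \certify{} messages inside the $5\delay$ window (justified by the timing walk-through in the text). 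When this certificate is re-broadcast and adopted (\cref{alg:line:update highest QC}), every correct node ends the phase holding a highest certificate of height at least that of $\block$; combined with the non-conflict property, every correct node therefore holds a highest certificate extending $\block$.

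The crux is the transition in \Call{NewView}{}. The possibly Byzantine new leader bundles $f+1$ certificates into its proposal, and a correct node votes only if the proposal extends the highest block among those certificates. By fact (ii), at least one bundled certificate comes from a correct node and hence extends $\block$, so the highest block among them extends $\block$, and every proposal a correct node accepts extends $\block$. A Byzantine leader that refuses to extend $\block$ simply produces no accepted proposal, the round-$1$ and round-$2$ blame timers fire, and a further view change is triggered. The main obstacle I anticipate is precisely this re-triggering step: I must verify that the locking and certificate invariants are re-established verbatim at the entry of the next view regardless of the Byzantine leader's behavior, so that the induction hypothesis applies cleanly and the chain of failed leaders does not erode the guarantee. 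Finally, since leaders repeat at most $f+1$ times before a correct one is reached and a correct leader's proposal (extending $\block$) is committed by all correct nodes via \cref{lem: honest sender}, hash-chaining makes $\block$ a committed ancestor for every correct node in every view $v' \ge v$, completing the induction.
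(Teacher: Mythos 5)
Your proposal is correct and follows essentially the same route as the paper's own proof: the $4\delay$ voting-in-the-head argument establishing that every correct node's \lockBlock{} extends the committed block, the observation that every quorum certificate (and any set of $f+1$ certificates) contains at least one correct node's vote so no certified block conflicts with $\block$ and the new-view proposal must extend it, and finally the timer-based liveness argument that a correct leader is never blamed so at most $f+1$ view changes restore the steady state. The only presentational differences are your explicit induction framing and your deferral of the $\delay$-accounting to the text's walk-through, both of which the paper carries out inline rather than as separate steps.
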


\begin{proof}
    If a block $\block$ is committed by a correct node $\nodei{i} \in \nodes$ at time $t$, then $\nodei{i}$ must have broadcast the message to all correct nodes by time $t-4\delay$ in \cref{alg:protocol:line:voting-in-the-head}. 
	\emph{All correct nodes} observed that this correctly extends their local locks \lockBlock{} (\cref{alg:protocol:line:voting-in-the-head}), and therefore did not send any \blameMsg{} messages.
	The key observation here is that the locked block of all correct nodes always extend a committed block.

	On quitting the view $v$, all correct nodes send their highest committed block to all the correct nodes (\cref{alg:line:send highest committed block}).
	Since the locked block of all correct nodes always extend a committed block, all the correct nodes vote for other node's highest committed blocks (\cref{alg:line:vote for highest qc}).
	Thus, the highest committed block among all correct nodes will always obtain a quorum certificate.
	This certificate is then multicast to all correct nodes who will update their highest committed block if it does not fork away from their local locked block \lockBlock{} (\cref{alg:line:update highest QC}).
	This is safe, since the certificate contains at least one correct node's vote, which ensures that the chain does not fork away from the highest committed block.

	The second part of the proof is to prove that a new leader can successfully start the new view mainly ensure that all the timers are correct so that \cref{lem: honest sender} can be applied from rounds $3$ onwards.

	First, we show that if the leader of the view $v'\ge v$ is correct, then it will be able to propose blocks.
	If the leaders between $(v,v')$ crash, the short-circuit of blames in \cref{alg:protocol:view-change:short-circuit} ensures another view change.

	Let the leader of view $v'$ be correct.
	In view $v$, let the first correct node observe $blameQC$ at time $t$ in \cref{alg:protocol:quit-view2}.
	By time $t+\delay$, this node will execute the procedure \Call{QuitView}, and by time $t+2\delay$, all correct nodes will quit the view $v$, and execute the procedure \Call{QuitView}.
	By time $t+6\delay$, all correct nodes will have their highest committed blocks certified.
	At this point, if a correct node had entered \Call{QuitView} at time $t+\delay$, the $5\delay$ wait was sufficient to get its highest committed block certified.

	By time $t+7\delay$, all correct nodes will receive and update their highest certified committed block and execute the procedure \Call{NewView}.
	Note that the highest committed block tracked by all correct nodes will extend the highest committed block of view $v$.

	Let the leader of the new view be the earliest node to have executed the procedure \Call{NewView} at time $t'=t+6\delay$.
	By time, $t'+\delay$, all correct nodes would have executed the procedure \Call{NewView}.
	By time $t'+4\delay$, the leader would have received the certificate for the highest committed block and also performed the chain synchronization.
	Thus, the $4\delay$ wait in \cref{alg:timer:new view leader wait for QC} suffices for a correct leader.
	The new leader proposes a valid block which will reach all the correct nodes by time $t'+5\delay$, and will be processed by all nodes (after chain synchronization) by time $t'+7\delay$.
	Even if a correct node had entered the \Call{NewView} procedure by time $t'$, the $8\delay$ time in \cref{alg:protocol:view-change:short-circuit} is sufficient to ensure that a correct leader is never blamed.

	If there are $f+1$ different commit QCs, then the new leader includes all the $f+1$ commit QCs and proposes a block extending the highest of them.
	This is safe since every QC, has a vote from at least one correct node, which ensures that the chain does not contradict with that correct node's \lockBlock{}, which ensures that it does not contradict with the highest committed block of view $v$.
	Since $f+1$ QCs are included in the status of the new proposal, at least one of these QCs must extend the highest committed block.
	The other QCs may be from Byzantine nodes trying to send lower certified blocks, but this is not a problem since the one correct node's QC will be longer than these QCs and the correct leader must extend it.
	Note that blocks lower than the highest committed block can always be certified, but the highest certified block will always extend these.

	Upon obtaining a valid proposal for $\currentroundvar = 1$, all nodes vote for the new proposal between time $t'+7\delay$ to $t'+9\delay$.
	All these votes will reach the correct leader by time $t'+10\delay$ and the leader can create the proposal for $\currentroundvar = 2$.
	By time $t'+11\delay$, all nodes receive the proposal for $\currentroundvar=2$, which will be processed by time $t'+13\delay$.
	If a correct node had sent a vote at time $t'+7\delay$, the $6\delay$ timer in \cref{alg:protocol:view change round 2 blame} ensures that a correct node is not blamed.
	From this point onwards, the system enters steady state.

	Thus, the view-change protocol ensures safety across views.
\end{proof}

Finally, we prove that \protocol{} satisfies the \textit{safety} and \textit{liveness} properties of \cref{def:SMR def} in \cref{thm:smr safety} and \cref{thm:smr liveness}.

\begin{theorem}[SMR safety]\label{thm:smr safety}
    In \protocol{}, all correct nodes commit the same block $\block$ for any height $m$.
\begin{proof}
	If a node commits $\block$ in view $v$, then \cref{lem: commit guarantee} and \cref{lem: proof local view change} ensure that the same block $\block$ is committed for height $m$ in the same or future views.
\end{proof}
\end{theorem}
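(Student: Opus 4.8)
The plan is to reduce this safety theorem to the two structural lemmas already in hand. By \cref{def:SMR def}, safety at height $m$ asserts that whenever two correct nodes output blocks $\blocki{m}$ and $\altblocki{m}$ at that height, the two blocks coincide. So I would fix two arbitrary correct nodes $\nodei{i}, \nodei{j} \in \nodes$ that commit at height $m$, say in views $v$ and $v'$ respectively, and prove $\blocki{m} = \altblocki{m}$ by a case split on whether the commits occur in the same view.

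First I would dispatch the same-view case. If $v = v'$, then \cref{lem: commit guarantee} applies verbatim: it already forbids two correct nodes from committing distinct blocks at a common height within a single view, so $\blocki{m} = \altblocki{m}$ at once.

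Next comes the cross-view case, which carries the substance. Assume without loss of generality $v \le v'$, so $\nodei{i}$ commits $\blocki{m}$ in the earlier view $v$. Here I would invoke \cref{lem: proof local view change} (Unique Extensibility): because $\nodei{i}$ commits $\blocki{m}$ in view $v$, every correct node---in particular $\nodei{j}$---commits that same block at height $m$ in all views $v' \ge v$, whence $\altblocki{m} = \blocki{m}$. A tidy way to fuse both cases is to take $v$ as the smallest view in which any correct node commits at height $m$ and let $\blocki{m}$ denote that commit; \cref{lem: commit guarantee} fixes the value for same-view commits while \cref{lem: proof local view change} propagates it to all later views, so every correct node that ever commits at height $m$ commits exactly $\blocki{m}$.

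The main obstacle is entirely absorbed into \cref{lem: proof local view change}, not into this theorem. The delicate reasoning---that each correct node's locked chain \lockBlock{} always extends its highest committed block, that such a block therefore collects a quorum certificate during the view-change, and that even a Byzantine new leader cannot propose a conflicting block without contradicting some correct node's lock---has been discharged there, together with the \delay-based timing arguments. At the granularity of this theorem nothing quantitative remains: once both lemmas are available the proof is merely the two-case (equivalently, minimal-view) combination above.
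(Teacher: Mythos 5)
Your proof is correct and takes essentially the same approach as the paper: the paper's own (one-sentence) proof likewise invokes \cref{lem: commit guarantee} for same-view commits and \cref{lem: proof local view change} for commits across views. Your version merely makes explicit the case split and minimal-view packaging that the paper leaves implicit.
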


\begin{theorem}[SMR liveness]\label{thm:smr liveness} 
    In \protocol{}, the protocol continues to make progress (commit blocks).
\begin{proof}
	During the steady-state, nodes commit a block as fast as the leader can propose blocks.
	During the view-change from the proof of \cref{lem: commit guarantee}, by time $21\delay$ the nodes enter the steady-state for the next view.
	Since we assume bounded synchrony within $f+1$ view-changes a new steady-state will emerge resulting in progress.
	The $f+1$ view-changes can be reduced to a constant in expectation by choosing the leaders for a view randomly~\cite{abrahamSynchronousByzantineAgreement2019}.
\end{proof}
\end{theorem}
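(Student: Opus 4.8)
The plan is to prove liveness by decomposing every execution into the two regimes the protocol alternates between---steady state and view change---and showing that progress is made whenever the current leader is correct, while every view change terminates and hands control to the next view within a fixed multiple of $\delay$. First I would invoke \cref{lem: honest sender}: whenever the leader of view \currentviewvar is correct, every correct node commits each proposed block, so during any steady-state stretch with a correct leader the protocol commits blocks at the rate the leader streams proposals. It therefore suffices to show that the protocol cannot be trapped in view changes forever and that a correct leader is eventually installed.

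Next I would analyze the view-change sub-protocol to show that, once triggered by $f+1$ blame messages, all correct nodes complete the \textsc{QuitView} and \textsc{NewView} procedures and re-enter the steady state within bounded time. This is exactly the timing argument carried out inside the proof of \cref{lem: proof local view change}: starting from the first correct node observing a valid $blameQC$ at time $t$, the $\delay$ quit-view wait, the $5\delay$ certificate-collection wait, the $4\delay$ and $8\delay$ new-leader timers, and the $6\delay$ round-$2$ timer compose so that all correct nodes enter the steady state of the next view by time $t+21\delay$. I would reuse those bounds directly rather than re-deriving them, noting that the bounded clock skew and the $\delay$ message-delay bound (plus the two extra $\delay$ rounds for chain synchronization) make each individual wait sufficient.

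Then I would close with a counting argument over leaders. Since at most $f$ nodes are Byzantine and leaders are assigned deterministically by \Call{Leader}{\currentviewvar}, among any $f+1$ consecutive views at least one leader is correct. A Byzantine leader either stalls (triggering the \blameTimer time-out) or equivocates, but in either case the blame mechanism forces a view change that terminates within $21\delay$ by the previous step; hence after at most $f+1$ view changes a correct leader takes over, and \cref{lem: honest sender} guarantees fresh commits, establishing liveness under bounded synchrony. To sharpen the worst-case $f+1$ bound to expected constant latency I would observe, following Abraham \etal~\cite{abrahamSynchronousByzantineAgreement2019}, that selecting the view leader uniformly at random makes the expected number of consecutive Byzantine leaders $O(1)$.

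The main obstacle I anticipate is not steady-state progress or the leader count---both routine---but ruling out the pathological case in which a \emph{correct} leader is nonetheless blamed, which would let the protocol cycle through view changes indefinitely and violate liveness even without adversarial leaders. Preventing this requires checking that every blame timer dominates the true worst-case propagation time of the action it guards: concretely that the $8\delay$ and subsequent $6\delay$ settings of \blameTimer, together with the $4\delay$ and $5\delay$ waits, exceed the time for a correct leader to collect certificates, propose, and have its proposal chain-synchronized. I would therefore treat the chain of timing inequalities established in \cref{lem: proof local view change} as the load-bearing content and cite them to conclude that a correct leader always survives long enough to drive the steady state, which is what closes the liveness argument.
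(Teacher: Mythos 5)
Your proposal is correct and follows essentially the same route as the paper's proof: steady-state progress via \cref{lem: honest sender}, a bounded ($21\delay$) view-change completion time taken from the timing analysis of the view-change protocol, at most $f+1$ view changes under bounded synchrony before a correct leader appears, and expected-constant view changes via random leader election~\cite{abrahamSynchronousByzantineAgreement2019}. One small point in your favor: you correctly attribute the $21\delay$ timing argument (and the guarantee that a correct leader is never blamed) to \cref{lem: proof local view change}, whereas the paper's proof cites \cref{lem: commit guarantee} for this, which appears to be a mis-reference since that lemma concerns only intra-view commit safety.
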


\begin{lemma}[Communication complexity]\label{lem:communication-complexity}
	The communication complexity of \protocol{} is $O(n^2)$ during the steady state and $O(n^3)$ which is amortized $O(n^2)$ per block during the view-change.
\begin{proof}
It is easy to see that during the steady-state the communication complexity of \protocol{} is $O(n^2)$.

During the view-change, a Byzantine leader from the old view can send a long chain of size $\ell$ as its \lockBlock{}.
This results in a communication complexity of $O(\ell n^2)$.
It can also do this with $\ell$ equivocating chains still resulting in a communication complexity of $O(\ell n^2)$ since the leader will pick some version of this chain.
However, this also results in committing $\ell$ blocks in the new view, so its cost is amortized to $O(n^2)$ per committed block.
\end{proof}
\end{lemma}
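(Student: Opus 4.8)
The plan is to bound the two phases separately by a direct message count, working in the fully-connected instantiation (the partially-connected case multiplies every broadcast by the flooding factor $d$ but is structurally identical). For the steady state I would observe that the only messages generated per round are: (i) the leader's single proposal broadcast (\cref{alg:protocol:steady state block stream}), i.e.\ $O(n)$ messages; (ii) each correct node's relay of the \emph{first} valid proposal (\cref{alg:protocol:steady state relay broadcast}), which the protocol explicitly caps at ``broadcast once,'' giving at most $n$ broadcasts of $O(n)$ messages each, hence $O(n^2)$; and (iii) the chain-synchronization traffic, which by the remark earlier in the excerpt a Byzantine node can trigger at most once per round, again $O(n^2)$. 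Summing yields $O(n^2)$ per committed block, and blame traffic is charged to the view change rather than the steady state.

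For the view change I would enumerate the broadcasts in \textsc{QuitView} and \textsc{NewView}: the $f+1$ blame messages and the resulting $blameQC$ (\cref{alg:protocol:quit-view1,alg:protocol:quit-view2}), the \commitUpdateMsg{} and \certify{} exchanges, the dissemination of each node's $commitQC$, and the leader's \newViewProposal{}. Each of these is a single broadcast to $O(n)$ recipients, so absent any large payload they total $O(n^2)$. The cubic term arises solely from payload size. A Byzantine old-view leader can have driven correct nodes to lock a chain \lockBlock{} of length $\ell$; the committed block then carried in \commitUpdateMsg{} and $commitQC$ is synchronized block-by-block, so flooding it costs $O(\ell n^2)$. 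The new leader's \newViewProposal{} may likewise bundle up to $f+1 = O(n)$ certificates (\cref{alg:timer:new view leader wait for QC}), an $O(n)$-size payload whose flooding also costs $O(n^3)$. Taking the worst case $\ell = O(n)$ gives the stated raw bound $O(n^3)$.

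The crux is the amortization. I would argue that whenever the adversary inflates the cost to $O(\ell n^2)$ by forcing a length-$\ell$ lock chain, \cref{lem: proof local view change} forces the new leader to extend the highest certified committed block, so all $\ell$ blocks of that chain are committed by every correct node once the new steady state is entered. Charging $O(\ell n^2)$ against these $\ell$ fresh commits yields $O(n^2)$ amortized per block, independent of $\ell$. The equivocation variant---where the leader instead sends $\ell$ conflicting chains or $f+1$ divergent certificates---is handled the same way, because the ``broadcast once'' discipline means each correct node forwards only one version, and the leader ultimately extends a single chain, so the total stays $O(\ell n^2)$ with $\Omega(\ell)$ commits.

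The main obstacle I anticipate is making this amortization airtight against the equivocation variant: I must rule out the adversary spending $O(\ell n^2)$ while forcing fewer than $\Omega(\ell)$ commits. This reduces to two bookkeeping facts---that correct nodes relay each distinct block at most once, and that certifying a chain of length $\ell$ necessarily commits all $\ell$ of its blocks---both of which follow from the ``broadcast once'' rule and the unique-extensibility guarantee of \cref{lem: proof local view change}. I therefore expect the difficulty to be careful accounting rather than a genuine gap.
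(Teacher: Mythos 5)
Your proposal follows essentially the same route as the paper's own proof: a direct $O(n^2)$ message count for the steady state (which the paper simply declares ``easy to see''), and for the view change the identical amortization argument --- charging the $O(\ell n^2)$ cost incurred by a Byzantine old-view leader's length-$\ell$ chain (or $\ell$ equivocating chains, of which the new leader ultimately extends a single version) against the $\ell$ blocks that \cref{lem: proof local view change} forces to be committed in the new view, yielding $O(n^2)$ per committed block. The only substantive deviation is your step ``taking the worst case $\ell = O(n)$'' to recover the raw $O(n^3)$ bound, which is unjustified since $\ell$ (the chain length) is not bounded by $n$; the paper does not make this claim either --- its proof leaves the raw view-change cost as $O(\ell n^2)$ and relies entirely on the per-block amortization, exactly as your final paragraph does, so that step should simply be dropped rather than repaired.
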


%%% Local Variables:
%%% mode: latex
%%% TeX-master: "main"
%%% End:

\fi{}

\end{document}